\documentclass[journal,onecolumn]{IEEEtran}

\usepackage[T1]{fontenc}
\usepackage{amsmath,amssymb,amsfonts,amsthm,mathtools}
\usepackage{bm}
\usepackage{array}
\usepackage{booktabs}
\usepackage{enumitem}
\usepackage{cite}
\usepackage{url}
\usepackage{microtype}
\allowdisplaybreaks
\newtheorem{theorem}{Theorem}
\newtheorem{lemma}{Lemma}

\newtheorem{corollary}{Corollary}

\newtheorem{remark}{Remark}

\newcommand{\F}{\mathbb F}
\newcommand{\C}{\mathcal C}
\newcommand{\D}{\mathcal D}
\newcommand{\B}{\mathcal B}
\newcommand{\N}{\mathcal N}
\newcommand{\Q}{\mathcal Q}
\newcommand{\Oset}{\mathcal O}
\newcommand{\U}{\mathcal U}
\newcommand{\PG}{\operatorname{PG}}
\newcommand{\PGL}{\operatorname{PGL}}
\newcommand{\GL}{\operatorname{GL}}
\newcommand{\Tr}{\operatorname{Tr}}
\newcommand{\Norm}{\operatorname{N}}
\newcommand{\Span}{\operatorname{span}}
\newcommand{\rank}{\operatorname{rank}}
\newcommand{\ord}{\operatorname{ord}}
\newcommand{\wt}{\operatorname{wt}}
\newcommand{\supp}{\operatorname{supp}}
\newcommand{\Aut}{\operatorname{Aut}}

\newcommand{\eps}{\varepsilon}
\newcommand{\sig}{\sigma}

\providecommand{\lcm}{\operatorname{lcm}}

\begin{document}

\title{Algebraic Resolutions of Seven Open Problems on Cyclic and Negacyclic Codes Supporting Designs}

\author{Yutong~Zhang,~\IEEEmembership{Member,~IEEE,}
        and~Yaoran~Yang,~\IEEEmembership{Member,~IEEE}%
\thanks{Manuscript received Month Day, 2026; revised Month Day, 2026.}
\thanks{Yutong Zhang and Yaoran Yang are with the School of Mathematics,
Sichuan University, Chengdu 610065, China
(e-mail: yutongzhang@stu.scu.edu.cn; yangyaoran@stu.scu.edu.cn).}
\thanks{Corresponding author: Yutong Zhang.}}

\markboth{IEEE Transactions on Information Theory,~Vol.~XX, No.~XX, Month~2026}%
{Zhang and Yang: Algebraic Resolutions of Seven Open Problems on Cyclic and Negacyclic Codes Supporting Designs}

\maketitle

\begin{abstract}
This paper gives a unified algebraic solution to seven open problems of Wang, Tang and Ding on cyclic, negacyclic and constacyclic codes supporting designs.  For the cyclic code
\[
C\left(\frac{p^s-1}{2},\frac{p^s+1}{2}\right),
\]
a Cayley parametrization of the unit circle reduces the trace-zero condition to a semilinear equation on \(\PG(1,q)\).  Its large root sets are exactly the \(\F_{p^{\gcd(m,s)}}\)-sublines, yielding the complementary design
\[
\overline{S(3,q_0+1,q+1)}.
\]
For the length \(q^2+1\) negacyclic code, a quotient transport from \(\U_{2(q^2+1)}\) to \(\U_{q^2+1}\) and a unit-circle parametrization show that the minimum zero sets are precisely the Baer sublines of \(\PG(1,q^2)\).  Equivalently, the corresponding support design is the complement of the non-tangent plane sections of an elliptic quadric \(\Q^-(3,q)\).

For constacyclic ovoid codes of length \(q^2+1\) over \(\F_q\), the exact existence criterion is
\[
\lambda\in\F_q^*,\qquad
\exists\ \lambda\text{-constacyclic ovoid code}
\Longleftrightarrow
\lambda\notin(\F_q^*)^2.
\]
In particular, negacyclic ovoid codes exist exactly when \(q\equiv3\pmod4\).  The proof uses the corrected projective-order congruence
\[
a=(q+1)c,\qquad c\equiv b\pmod{q-1},\qquad
\operatorname{ord}(\theta\F_q^*)=\frac{q^2+1}{\gcd(q^2+1,c)}.
\]
The paper also derives a universal weight enumerator for lifted ovoid codes over extension fields, independent of the chosen ovoid.  Finally, consecutive-root negacyclic MDS codes are constructed to give complete simple \(5\)-designs, including a proper negacyclic \([11,5,7]_{23}\) code whose minimum supports form the complete \(5-(11,7,15)\) design.
\end{abstract}

\begin{IEEEkeywords}
Cyclic code, negacyclic code, constacyclic code, ovoid code, elliptic quadric, Baer subline, Miquelian inversive plane, rank-kernel transform, \(t\)-design.
\end{IEEEkeywords}

\section{Introduction}
The present work lies at the intersection of algebraic coding theory,
design theory, and finite geometry.  The Assmus--Mattson theorem and its
subsequent developments provide a classical mechanism for deriving
\(t\)-designs from linear codes \cite{AssmusMattson1969,DingTangBook},
while standard coding-theoretic tools such as trace representations,
BCH-type defining sets, constacyclic descriptions, MDS bounds, and weight
enumerators are treated in \cite{MacWilliamsSloane,HuffmanPless,VanLint,Roth}.
The finite-field and projective-geometric ingredients used below, including
trace and norm maps, subfields, projective lines, quadrics, ovoids, and Baer
subgeometries, are standard in
\cite{LidlNiederreiter,Hirschfeld,HirschfeldThas,Dembowski}; the general
language of incidence structures and \(t\)-designs follows
\cite{BethJungnickelLenz}.  In this framework, Wang, Tang and Ding constructed
infinite families of cyclic and negacyclic codes supporting designs and posed
the seven open problems resolved in this paper \cite{WangTangDing2023}.
Let
\begin{equation}
        q=p^m,
        \qquad
        p\text{ odd},
        \qquad
        n=q^2+1.                                                \label{eq:intro-basic}
\end{equation}
For
\begin{equation}
        1\le s<m,
        \qquad
        \ell=\gcd(m,s),
        \qquad
        q_0=p^{\ell},                                          \label{eq:intro-q0}
\end{equation}
write
\begin{equation}
        \overline{\D}=\bigl(V,\{V\setminus B:B\in\D\}\bigr)    \label{eq:intro-complement}
\end{equation}
for the complementary design on the same point set \(V\).  The first result identifies the minimum supports of the cyclic code
\begin{equation}
        C_s=C\left(\frac{p^s-1}{2},\frac{p^s+1}{2}\right)       \label{eq:intro-Cs}
\end{equation}
from \cite{WangTangDing2023}.  Namely,
\begin{equation}
        \B_{\min}(C_s)
        \cong
        \overline{S(3,q_0+1,q+1)},                              \label{eq:intro-op18-result}
\end{equation}
where \(S(3,q_0+1,q+1)\) is the incidence structure whose blocks are the \(\F_{q_0}\)-sublines of \(\PG(1,q)\).  The core mechanism is
\begin{equation}
        \vartheta_{\eps}:\PG(1,q)\longrightarrow \U_{q+1},
        \qquad
        \vartheta_{\eps}(x)=\frac{x+\eps}{x-\eps},
        \qquad
        \eps^q=-\eps,                                          \label{eq:intro-cayley-q}
\end{equation}
followed by the semilinear rigidity theorem
\begin{equation}
        X^{\sig}=\frac{\alpha X+\beta}{\gamma X+\delta},
        \qquad
        \sig:x\mapsto x^{p^s}.                                  \label{eq:intro-semi-rigid}
\end{equation}
If a solution set has at least three points and is not the whole projective line, then it is an \(\F_{q_0}\)-subline.

For Open Problem 27, assume
\begin{equation}
        q\equiv1\pmod4,
        \qquad
        K=\F_{q^2},
        \qquad
        n=q^2+1.                                                \label{eq:intro-op27-param}
\end{equation}
Let \(N_+=C(1,q^2+q+1)\) be the length-\(n\) negacyclic code over \(K\) considered in \cite{WangTangDing2023}.  If \(\delta\in\F_{q^4}\) has order \(2n\), then the coordinate representatives
\begin{equation}
        R=\{\delta^{-i}:0\le i\le n-1\}\subseteq \U_{2n}          \label{eq:intro-R}
\end{equation}
map bijectively to \(\U_n=\U_{q^2+1}\) by
\begin{equation}
        R\longrightarrow\U_n,
        \qquad
        x\longmapsto x^{q+1}.                                  \label{eq:intro-quotient-map}
\end{equation}
Under this quotient the minimum zero sets are exactly
\begin{equation}
        \{\Theta_{\omega}(B):B\text{ a Baer subline of }\PG(1,q^2)\},
        \qquad
        \Theta_{\omega}(t)=\frac{t+\omega}{t-\omega},
        \quad
        \omega^{q^2}=-\omega.                                  \label{eq:intro-op27-zero}
\end{equation}
Consequently
\begin{equation}
        \B_{\min}(N_+)
        \cong
        \overline{\mathsf I(q)}
        \cong
        \B_{\min}(\Q^-(3,q)),                                    \label{eq:intro-op27-result}
\end{equation}
where \(\mathsf I(q)\) is the Miquelian inversive plane of order \(q\), and \(\Q^-(3,q)\) is the elliptic quadric in \(\PG(3,q)\).

For Open Problem 28, the complete square-class answer is
\begin{equation}
\boxed{
        \lambda\in\F_q^*,
        \qquad
        \exists\text{ a length-}q^2+1\;\lambda\text{-constacyclic ovoid code over }\F_q
        \Longleftrightarrow
        \lambda\notin(\F_q^*)^2.}
                                                                    \label{eq:intro-op28-result}
\end{equation}
For \(\lambda=-1\), this gives
\begin{equation}
        \exists\text{ a length-}q^2+1\text{ negacyclic ovoid code over }\F_q
        \Longleftrightarrow
        q\equiv3\pmod4.                                          \label{eq:intro-negacyclic-condition}
\end{equation}
The essential correction is that if \(\Omega\) generates \(\F_{q^4}^*\),
\begin{equation}
        \xi=\Omega^{(q^4-1)/(q-1)}=\Omega^{(q+1)(q^2+1)}           \label{eq:intro-xi}
\end{equation}
generates \(\F_q^*\), and
\begin{equation}
        \theta=\Omega^a,
        \qquad
        \theta^{q^2+1}=\xi^b,                                    \label{eq:intro-theta-b}
\end{equation}
then
\begin{equation}
        a\equiv(q+1)b\pmod{q^2-1}
        \Longleftrightarrow
        a=(q+1)c,
        \qquad
        c\equiv b\pmod{q-1}.                                    \label{eq:intro-c-corrected}
\end{equation}
The projective order is controlled by \(c\), not by an arbitrary representative \(b\):
\begin{equation}
        \ord(\theta\F_q^*)
        =\frac{q^2+1}{\gcd(q^2+1,c)}.                             \label{eq:intro-order-c}
\end{equation}
This is the parity step forcing \(\lambda\) to be a nonsquare.

For Open Problems 37 and 38, let \(\Oset\subset\PG(3,q)\) be any ovoid and let
\begin{equation}
        \C_e(\Oset)=\{aG_{\Oset}:a\in\F_{q^e}^4\}\subseteq\F_{q^e}^{q^2+1}. \label{eq:intro-lift}
\end{equation}
Put
\begin{equation}
        M_r(e)=\prod_{j=0}^{r-1}(q^e-q^j),
        \qquad
        M_0(e)=1,
        \qquad
        M_r(e)=0\quad(r>e).                                      \label{eq:intro-Mr}
\end{equation}
Then
\begin{align}
W_{\C_e(\Oset)}(z)=1
&+(q^3+q)M_1(e)z^{q^2-q}                                                \notag\\
&+\frac{q^2(q^2+1)}2M_2(e)z^{q^2-1}                                    \notag\\
&+\bigl((q^2+1)M_1(e)+(q^2+1)(q+1)M_2(e)+(q^2+1)M_3(e)\bigr)z^{q^2}    \notag\\
&+\left(\frac{q^2(q^2+1)}2M_2(e)+q(q^2+1)M_3(e)+M_4(e)\right)z^{q^2+1}. \label{eq:intro-general-lift}
\end{align}
For \(e=2\),
\begin{align}
W_{\C_2(\Oset)}(z)=1
&+(q^5-q)z^{q^2-q}
 +\frac{q^3(q-1)(q^4-1)}2\bigl(z^{q^2-1}+z^{q^2+1}\bigr)              \notag\\
&+(q^7-q^5+q^4-q^3+q-1)z^{q^2}.                              \label{eq:intro-e2-enumerator}
\end{align}
This is the common four-weight distribution of the lifted ovoid codes appearing in the two negacyclic families of \cite{WangTangDing2023}.

For Open Problems 40 and 41, for odd \(q\) define
\begin{equation}
        n_q=\frac{q-1}{2},
        \qquad
        \rho\in\F_q^*,
        \qquad
        \ord(\rho)=q-1.                                      \label{eq:intro-Nq-params}
\end{equation}
For \(1\le k\le n_q\), set
\begin{equation}
        \N_{q,k}
        =\left\langle
          \prod_{j=0}^{n_q-k-1}(X-\rho^{2j+1})
        \right\rangle
        \subseteq \F_q[X]/(X^{n_q}+1).                            \label{eq:intro-Nqk-full}
\end{equation}
Then
\begin{equation}
        \N_{q,k}\text{ is a proper negacyclic }[n_q,k,n_q-k+1]_q\text{ MDS code}, \label{eq:intro-Nqk-MDS-full}
\end{equation}
and, for
\begin{equation}
        n_q-k+1\le w\le n_q<q,                                    \label{eq:intro-MDS-range-full}
\end{equation}
one has the support saturation identity
\begin{equation}
        \B_w(\N_{q,k})=\binom{[n_q]}{w}.                          \label{eq:intro-MDS-complete-full}
\end{equation}
Thus \(q=23,k=5\) gives a proper negacyclic \([11,5,7]_{23}\) code whose minimum supports form the complete simple
\begin{equation}
        5-(11,7,15)                                                \label{eq:intro-OP40-design-full}
\end{equation}
design, and odd prime powers \(q\ge23\) give infinitely many complete simple \(5\)-designs supported by proper constacyclic codes.

\section{Projective-Line and Support Preliminaries}
\subsection{Sublines and Baer sublines}
For a finite field \(K\), write
\begin{equation}
        \PG(1,K)=K\cup\{\infty\}.                                  \label{eq:PG1K}
\end{equation}
For
\begin{equation}
        L(X)=\frac{aX+b}{cX+d},
        \qquad
        \begin{pmatrix}a&b\\c&d\end{pmatrix}\in\GL(2,K),             \label{eq:mobius}
\end{equation}
we use the usual projective conventions at \(\infty\) and at the pole.  If \(K_0\subset K\), a \(K_0\)-subline of \(\PG(1,K)\) is
\begin{equation}
        L(\PG(1,K_0)),
        \qquad
        L\in\PGL(2,K).                                             \label{eq:subline-def}
\end{equation}
Since \(\PGL(2,K)\) is sharply \(3\)-transitive,
\begin{equation}
        \{K_0\text{-sublines of }\PG(1,K)\}
        \cong \PGL(2,K)/\PGL(2,K_0).                                \label{eq:subline-cosets}
\end{equation}
Thus, for \(|K|=q\) and \(|K_0|=q_0\),
\begin{align}
        b(K/K_0)
        &=\frac{|\PGL(2,K)|}{|\PGL(2,K_0)|}                         \notag\\
        &=\frac{q(q^2-1)}{q_0(q_0^2-1)}                              \notag\\
        &=\frac{\binom{q+1}{3}}{\binom{q_0+1}{3}}.                  \label{eq:subline-count}
\end{align}
For \(K=\F_{q^2}\) and \(K_0=\F_q\), the \(K_0\)-sublines are called Baer sublines, and
\begin{equation}
        b(\F_{q^2}/\F_q)
        =\frac{q^2(q^4-1)}{q(q^2-1)}
        =q(q^2+1).                                                \label{eq:baer-count}
\end{equation}
The incidence structure
\begin{equation}
        \mathsf I(q)
        =\bigl(\PG(1,q^2),\{\F_q\text{-sublines}\}\bigr)             \label{eq:miquelian-plane}
\end{equation}
is the Miquelian inversive plane of order \(q\).

\subsection{Cayley coordinates}
Let \(E/F\) be quadratic, \(|F|=Q\), and choose
\begin{equation}
        \eps\in E^*,
        \qquad
        \eps^Q=-\eps.                                             \label{eq:eps-trace-zero}
\end{equation}
The norm-one circle
\begin{equation}
        \U(E/F)=\{u\in E^*:u^{Q+1}=1\}                              \label{eq:norm-one-circle}
\end{equation}
is parametrized by
\begin{equation}
        \vartheta_{\eps}:\PG(1,F)\longrightarrow\U(E/F),
        \qquad
        \vartheta_{\eps}(x)=\frac{x+\eps}{x-\eps},
        \qquad
        \vartheta_{\eps}(\infty)=1.                                 \label{eq:cayley}
\end{equation}
Indeed, for \(x\in F\),
\begin{equation}
        \vartheta_{\eps}(x)^Q
        =\frac{x-\eps}{x+\eps}
        =\vartheta_{\eps}(x)^{-1}.                                  \label{eq:cayley-norm}
\end{equation}
Conversely, for \(u\in\U(E/F)\setminus\{1\}\),
\begin{equation}
        x=\eps\frac{u+1}{u-1},
        \qquad
        x^Q=(-\eps)\frac{u^{-1}+1}{u^{-1}-1}=x.                    \label{eq:cayley-inverse}
\end{equation}
We use this twice:
\begin{align}
        &E=\F_{q^2},
        \quad F=\F_q,
        \quad \U_{q+1}=\{u\in\F_{q^2}^*:u^{q+1}=1\},                \label{eq:Uq1}\\
        &E=\F_{q^4},
        \quad F=\F_{q^2},
        \quad \U_{q^2+1}=\{u\in\F_{q^4}^*:u^{q^2+1}=1\}.            \label{eq:Uq2p1}
\end{align}
In the second case we also write
\begin{equation}
        \Theta_{\omega}(t)=\frac{t+\omega}{t-\omega},
        \qquad
        \omega^{q^2}=-\omega,
        \qquad
        t\in\PG(1,q^2).                                           \label{eq:Theta}
\end{equation}

\subsection{Minimum support quotient}
Let \(\C\) be a linear code over \(\F_Q\) with minimum distance \(d\).  If \(c,d'\in\C\) are minimum words and
\begin{equation}
        \supp(c)=\supp(d'),                                      \label{eq:same-support}
\end{equation}
then
\begin{equation}
        d'=\lambda c,
        \qquad
        \lambda\in\F_Q^*.                                       \label{eq:scalar-duplicate}
\end{equation}
Indeed, if \(d'\notin\F_Qc\), choose \(i\in\supp(c)\) and put \(\lambda=d'_i/c_i\).  Then
\begin{equation}
        0\ne d'-\lambda c\in\C,
        \qquad
        \supp(d'-\lambda c)\subsetneq\supp(c),                    \label{eq:duplicate-support-proof}
\end{equation}
which contradicts \(\wt(c)=d\).  Therefore, if \(A_d\) denotes the number of minimum words, then
\begin{equation}
        \#\{\text{minimum supports}\}=\frac{A_d}{Q-1}.             \label{eq:minimum-support-quotient}
\end{equation}
The same quotient counts minimum zero supports, because
\begin{equation}
        Z(c)=\{0,1,\ldots,n-1\}\setminus\supp(c)                  \label{eq:zero-support-complement}
\end{equation}
is a bijective complement operation on the coordinate set.

\section{Semilinear Large-Root Sets}
Let \(K=\F_Q\), let \(\sig\in\Aut(K)\), and put
\begin{equation}
        K_{\sig}=\{x\in K:x^{\sig}=x\}.                            \label{eq:fixed-field}
\end{equation}
A \(\sig\)-fractional equation on \(\PG(1,K)\) is
\begin{equation}
        X^{\sig}=M(X),
        \qquad
        M\in\PGL(2,K).                                             \label{eq:sigma-fractional}
\end{equation}

\begin{theorem}[three-root normalization]
\label{thm:three-root}
Let
\begin{equation}
        R=\{X\in\PG(1,K):X^{\sig}=M(X)\}.                         \label{eq:three-root-R}
\end{equation}
If \(R\) contains three distinct points, then \(R\) is a \(K_{\sig}\)-subline.  More precisely, if \(L\in\PGL(2,K)\) sends three roots \(r_1,r_2,r_3\) to \(\infty,0,1\), then
\begin{equation}
        L(R)=\PG(1,K_{\sig}).                                      \label{eq:three-root-image}
\end{equation}
\end{theorem}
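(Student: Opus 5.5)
We conjugate the equation by \(L\) and use the three normalized roots to force the transformed Möbius map to be the identity. Put \(Y=L(X)\). Since \(X^{\sig}=M(X)\), we get
\[
Y^{\sig}=L^{\sig}(X^{\sig})=L^{\sig}(M(X))=L^{\sig}ML^{-1}(Y),
\]
where \(L^{\sig}\) is the matrix of \(L\) with \(\sig\) applied entrywise; the identity \(L(X)^{\sig}=L^{\sig}(X^{\sig})\) holds because \(\sig\) is a field automorphism. So \(L(R)\) is the root set of \(Y^{\sig}=M'(Y)\), with
\[
M'=L^{\sig}ML^{-1}\in\PGL(2,K).
\]
By hypothesis \(\infty,0,1\in L(R)\). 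These three points are fixed by \(\sig\), with the convention \(\infty^{\sig}=\infty\). Hence \(M'\) fixes \(\infty,0,1\). Sharp \(3\)-transitivity of \(\PGL(2,K)\), quoted before \eqref{eq:subline-cosets}, then gives \(M'=\mathrm{id}\).

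With \(M'=\mathrm{id}\), the transformed equation is \(Y^{\sig}=Y\). Its solutions in \(\PG(1,K)\) are \(\infty\) together with the elements of \(K_{\sig}\) from \eqref{eq:fixed-field}. This proves \(L(R)=\PG(1,K_{\sig})\), which is \eqref{eq:three-root-image}. It follows that \(R=L^{-1}(\PG(1,K_{\sig}))\) is a \(K_{\sig}\)-subline in the sense of \eqref{eq:subline-def}, and the same \(L\) works for any choice of three roots.

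The main obstacle is to justify the conjugation identity projectively, including at \(\infty\) and at poles. To do this I would pass to homogeneous coordinates. A point is \([x_0:x_1]\), \(\sig\) acts coordinatewise, and the equation \(X^{\sig}=M(X)\) becomes \([v^{\sig}]=[Mv]\). For \(w=Lv\) we have \(w^{\sig}=L^{\sig}v^{\sig}\) as vectors. Therefore \([w^{\sig}]=[L^{\sig}MCL^{-1}w]\) reduces to \([w^{\sig}]=[M'w]\), where the scalar ambiguity is absorbed projectively (the stray \(C\) should simply read \(M'=L^{\sig}ML^{-1}\)). In these coordinates the points \(\infty,0,1\) are \([1:0],[0:1],[1:1]\), which are visibly \(\sig\)-fixed. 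This makes each step free of case distinctions.
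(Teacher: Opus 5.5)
Your proof is correct and follows the paper's argument essentially verbatim: you conjugate by \(L\) to get \(Y^{\sig}=L^{\sig}ML^{-1}(Y)\), note that this projectivity fixes the \(\sig\)-fixed points \(\infty,0,1\), use sharp \(3\)-transitivity to conclude it is the identity, and read off \(L(R)=\PG(1,K_{\sig})\). Your homogeneous-coordinate paragraph (with the stray \(C\) deleted, as you note) is a harmless addition that spells out the conventions at \(\infty\) and at the pole, which the paper leaves implicit.
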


\begin{proof}
For
\begin{equation}
        L(X)=\frac{aX+b}{cX+d},
        \qquad
        L^{\sig}(X)=\frac{a^{\sig}X+b^{\sig}}{c^{\sig}X+d^{\sig}},  \label{eq:Lsigma}
\end{equation}
one has
\begin{equation}
        L(X)^{\sig}=L^{\sig}(X^{\sig}).                              \label{eq:Lsigma-compatible}
\end{equation}
With \(Y=L(X)\), the equation \(X^{\sig}=M(X)\) becomes
\begin{equation}
        Y^{\sig}=L^{\sig}ML^{-1}(Y).                                \label{eq:conjugated-fractional}
\end{equation}
For \(j=1,2,3\),
\begin{equation}
        L(r_j)\in\{\infty,0,1\},
        \qquad
        L(r_j)^{\sig}=L(r_j),                                      \label{eq:normalized-roots-fixed}
\end{equation}
and hence
\begin{equation}
        L^{\sig}ML^{-1}(L(r_j))=L(r_j).                             \label{eq:N-fixes-three}
\end{equation}
A projectivity fixing \(\infty,0,1\) is the identity, so
\begin{equation}
        L^{\sig}ML^{-1}=1,
        \qquad
        Y^{\sig}=Y,
        \qquad
        Y\in\PG(1,K_{\sig}).                                      \label{eq:N-identity}
\end{equation}
Applying \(L^{-1}\) proves the claim.
\end{proof}

The associated homogeneous semilinear form is
\begin{equation}
        \Phi(X_0,X_1)
        =A X_1^{\sig}X_1+B X_1^{\sig}X_0
         +C X_0^{\sig}X_1+D X_0^{\sig}X_0.                          \label{eq:hom-semi}
\end{equation}
On \(X=X_1/X_0\), it reads
\begin{equation}
        A X^{\sig+1}+B X^{\sig}+CX+D=0.                              \label{eq:semi-affine}
\end{equation}
If
\begin{equation}
        AD-BC\ne0,                                                   \label{eq:det-nondeg}
\end{equation}
then \eqref{eq:semi-affine} is equivalent to
\begin{equation}
        X^{\sig}= -\frac{CX+D}{AX+B}.                                \label{eq:semi-fractional}
\end{equation}
If \(AD-BC=0\), then the two rows \((A,B)\) and \((C,D)\) are proportional unless one row is zero, and \(\Phi\) factors into a product of projective linear or semilinear factors.  Hence its zero set has at most two points unless \(\Phi\equiv0\).

\begin{corollary}[large roots are sublines]
\label{cor:large-roots}
If the zero set of a nonzero form \eqref{eq:hom-semi} in \(\PG(1,K)\) has more than two points and is not all of \(\PG(1,K)\), then it is a \(K_{\sig}\)-subline.  Conversely, every \(K_{\sig}\)-subline is the zero set of a form \eqref{eq:hom-semi}.
\end{corollary}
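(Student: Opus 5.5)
The plan has two directions. The forward direction reduces to Theorem~\ref{thm:three-root} through a case split on \(AD-BC\). The converse is an explicit construction by pulling back the fixed-field equation.

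\emph{Forward direction.} Let \(\Phi\ne0\) have zero set \(Z\) with \(|Z|\ge3\) and \(Z\ne\PG(1,K)\).

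First suppose \(AD-BC=0\). I will show that then \(|Z|\le2\), contradicting the hypothesis. If the rows \((A,B),(C,D)\) are proportional, write \((A,B)=\mu(a,b)\) and \((C,D)=\nu(a,b)\). Then
\[
\Phi=(aX_1+bX_0)\,(\mu X_1^{\sig}+\nu X_0^{\sig}).
\]
The first factor is a nonzero linear form, so it vanishes at exactly one point of \(\PG(1,K)\). The second factor equals \(\bigl(\mu^{\sig^{-1}}X_1+\nu^{\sig^{-1}}X_0\bigr)^{\sig}\), so it also vanishes at exactly one point. If one row is zero, \(\Phi\) is already a product of this kind with one factor absent. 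Hence \(|Z|\le2\). I will phrase this using the factorization remark made just before the corollary.

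So \(AD-BC\ne0\). The matrix
\[
M=\begin{pmatrix}-C&-D\\A&B\end{pmatrix}
\]
has determinant \(AD-BC\ne0\), so \(M\in\PGL(2,K)\). Then \(\Phi=0\) is equivalent to \(X^{\sig}=M(X)\) as in \eqref{eq:semi-fractional}. The conventions at \(\infty\) and at the pole need checking: the point \(\infty=(0{:}1)\) gives \(\Phi=A\), and \(M(\infty)^{\sig}=\infty\) holds iff \(A=0\). Likewise \(X=-B/A\) satisfies \(\Phi=0\) iff \(X^{\sig}=\infty\), which is impossible. Thus \(Z\) is exactly the set \(R\) of Theorem~\ref{thm:three-root}, so \(Z\) is a \(K_{\sig}\)-subline.

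The hypothesis \(Z\ne\PG(1,K)\) is not used again. It only excludes the degenerate case \(\sig=1\), where \(K_{\sig}=K\) and the whole line is itself a subline.

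\emph{Converse.} Start from \(\PG(1,K_{\sig})\), which is the zero set of
\[
X_1^{\sig}X_0-X_0^{\sig}X_1,
\]
that is, the form with \(B=1\), \(C=-1\), \(A=D=0\). Given a subline \(L(\PG(1,K_{\sig}))\), write \(Y=L^{-1}(X)=(aX+b)/(cX+d)\). Substitute the homogeneous forms \(Y_1=aX_1+bX_0\) and \(Y_0=cX_1+dX_0\) into this form. Since \(Y_i^{\sig}\) is \(\sig\)-semilinear in \((X_0,X_1)\), expanding \(Y_1^{\sig}Y_0-Y_0^{\sig}Y_1\) gives a form of type \eqref{eq:hom-semi}. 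It is nonzero because its coefficient determinant is a nonzero multiple of \(\det(L^{-1})^{\sig}\det(L^{-1})\). Its zero set is \(L(\PG(1,K_{\sig}))\), by \eqref{eq:Lsigma-compatible}.

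The only real care needed is the bookkeeping at \(\infty\) and at the pole in the nondegenerate case, which is a routine check. The degenerate factorization is the step most likely to hide a gap, so I will write out both subcases (proportional rows, and one zero row) explicitly.
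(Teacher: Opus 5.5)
Your proposal is correct and takes the same route as the paper: the degenerate case $AD-BC=0$ is handled by the factorization into two factors, each vanishing at one point. The nondegenerate case is rewritten as $X^{\sig}=M(X)$ and handed to Theorem~\ref{thm:three-root}, and the converse pulls back the fixed-field equation through $L$. Your homogeneous bookkeeping only makes explicit what the paper leaves implicit: the checks at $\infty$ and at the pole, and the nonvanishing of the pulled-back form via its coefficient determinant $\det(L^{-1})^{\sig}\det(L^{-1})$. One small wording slip: when a row is zero, no factor is actually absent (e.g.\ $X_1^{\sig}(AX_1+BX_0)$); this subcase is just the proportional case with $\mu$ or $\nu$ equal to $0$, so the bound of two zeros is unaffected.
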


\begin{proof}
The nondegenerate case follows from \eqref{eq:semi-fractional} and Theorem~\ref{thm:three-root}.  The degenerate case has at most two zeros, unless the form is identically zero.  Conversely, if
\begin{equation}
        B_0=L^{-1}(\PG(1,K_{\sig})),
        \qquad
        L(X)=\frac{aX+b}{cX+d},                                      \label{eq:subline-L}
\end{equation}
then \(X\in B_0\) is equivalent to
\begin{equation}
        L(X)^{\sig}=L(X).                                           \label{eq:subline-fixed}
\end{equation}
After clearing denominators,
\begin{equation}
        (a^{\sig}X^{\sig}+b^{\sig})(cX+d)
        -(aX+b)(c^{\sig}X^{\sig}+d^{\sig})=0,                       \label{eq:subline-equation}
\end{equation}
which is of the form \eqref{eq:semi-affine}.
\end{proof}

\section{Open Problem 18: Cyclic Codes and Spherical Geometry}
Let
\begin{equation}
        q=p^m,
        \qquad
        1\le s<m,
        \qquad
        \ell=\gcd(m,s),
        \qquad
        q_0=p^{\ell},
        \qquad
        \sig:x\mapsto x^{p^s}.                                      \label{eq:cyclic-params}
\end{equation}
Then
\begin{equation}
        \F_q^{\sig}=\F_{q_0},
        \qquad
        1<q_0<q.                                                    \label{eq:cyclic-fixed}
\end{equation}
Let
\begin{equation}
        C_s=C\left(\frac{p^s-1}{2},\frac{p^s+1}{2}\right)           \label{eq:Cs-def}
\end{equation}
be the cyclic code of \cite{WangTangDing2023}.  The trace representation used in \cite[Th. 13]{WangTangDing2023} is
\begin{equation}
        c(a,b)=\left(
        \Tr_{q^2/q}\left(a\beta^{-\frac{p^s-1}{2}i}
        +b\beta^{-\frac{p^s+1}{2}i}\right)
        \right)_{i=0}^{q},
        \qquad
        a,b\in\F_{q^2},                                             \label{eq:cyclic-trace}
\end{equation}
where \(\beta\) has order \(q+1\).  Put
\begin{equation}
        u=\beta^{-i}\in\U_{q+1}.                                   \label{eq:cyclic-u}
\end{equation}
Since \(u^q=u^{-1}\), the zero condition is
\begin{align}
0&=\Tr_{q^2/q}\left(a u^{(p^s-1)/2}+b u^{(p^s+1)/2}\right)           \notag\\
 &=a u^{(p^s-1)/2}+b u^{(p^s+1)/2}
   +a^q u^{-(p^s-1)/2}+b^q u^{-(p^s+1)/2}                            \notag\\
 &=u^{-(p^s+1)/2}
   \left(a u^{p^s}+bu^{p^s+1}+a^q u+b^q\right).                    \label{eq:cyclic-zero-condition}
\end{align}
Thus the zero set is given by
\begin{equation}
        F_{a,b}(u)=a u^{p^s}+bu^{p^s+1}+a^q u+b^q=0.                \label{eq:Fcyclic}
\end{equation}
Choose \(\eps\in\F_{q^2}\) with \(\eps^q=-\eps\), and write
\begin{equation}
        u=\vartheta_{\eps}(x)=\frac{x+\eps}{x-\eps},
        \qquad
        x\in\PG(1,q).                                               \label{eq:cyclic-cayley}
\end{equation}
For finite \(x\), define
\begin{equation}
        H_{a,b}(x)=(x-\eps)^{p^s+1}
        F_{a,b}\left(\frac{x+\eps}{x-\eps}\right).                 \label{eq:Hcyclic-def}
\end{equation}
Since
\begin{equation}
        (x\pm\eps)^{p^s}=x^{\sig}\pm\eps^{\sig},                  \label{eq:sigma-xeps}
\end{equation}
one obtains
\begin{align}
H_{a,b}(x)
&=a(x+\eps)^{\sig}(x-\eps)
  +b(x+\eps)^{\sig}(x+\eps)                                      \notag\\
&\quad +a^q(x+\eps)(x-\eps)^{\sig}
  +b^q(x-\eps)^{\sig}(x-\eps)                                    \notag\\
&=A_{a,b}x^{\sig+1}+B_{a,b}x^{\sig}+C_{a,b}x+D_{a,b},             \label{eq:Hcyclic-expanded}
\end{align}
where
\begin{align}
A_{a,b}&=a+b+a^q+b^q,                                               \label{eq:cyclic-A}\\
B_{a,b}&=\eps(-a+b+a^q-b^q),                                       \label{eq:cyclic-B}\\
C_{a,b}&=\eps^{\sig}(a+b-a^q-b^q),                                 \label{eq:cyclic-C}\\
D_{a,b}&=\eps^{\sig+1}(-a+b-a^q+b^q).                              \label{eq:cyclic-D}
\end{align}
These four coefficients lie in \(\F_q\).  Explicitly,
\begin{align}
&(a+b+a^q+b^q)^q=a+b+a^q+b^q,                                      \label{eq:cyclic-coeff-check1}\\
&(-a+b+a^q-b^q)^q=-(-a+b+a^q-b^q),                                 \label{eq:cyclic-coeff-check2}\\
&(a+b-a^q-b^q)^q=-(a+b-a^q-b^q),                                   \label{eq:cyclic-coeff-check3}\\
&(-a+b-a^q+b^q)^q=-a+b-a^q+b^q,                                    \label{eq:cyclic-coeff-check4}\\
&(\eps^{\sig})^q=-\eps^{\sig},
        \qquad
        (\eps^{\sig+1})^q=\eps^{\sig+1}.                          \label{eq:cyclic-coeff-check5}
\end{align}
Therefore the Cayley inverse of every zero set of \(C_s\) is a zero set of \eqref{eq:semi-affine} over \(K=\F_q\).  If a nonzero codeword has more than two zeros, Corollary~\ref{cor:large-roots} gives an \(\F_{q_0}\)-subline.

The external weight-distribution input from \cite[Th. 13]{WangTangDing2023} is
\begin{equation}
        \wt_{\min}(C_s)=q-q_0,
        \qquad
        A_{q-q_0}=\frac{q^4-q^3-q^2+q}{q_0^3-q_0}.                  \label{eq:cyclic-WTD-input}
\end{equation}
Thus every minimum word has
\begin{equation}
        |Z(c)|=(q+1)-(q-q_0)=q_0+1>2,                              \label{eq:cyclic-min-zero-size}
\end{equation}
and this zero set is not all of \(\PG(1,q)\), because \(q_0<q\).  Hence each minimum zero set is the Cayley image of an \(\F_{q_0}\)-subline.  Conversely, by \eqref{eq:minimum-support-quotient},
\begin{align}
        \#\{\text{minimum zero supports}\}
        &=\frac{A_{q-q_0}}{q-1}                                      \notag\\
        &=\frac{q^4-q^3-q^2+q}{(q-1)(q_0^3-q_0)}                     \notag\\
        &=\frac{q(q-1)(q^2-1)}{(q-1)q_0(q_0^2-1)}                   \notag\\
        &=\frac{q(q^2-1)}{q_0(q_0^2-1)}.                            \label{eq:cyclic-support-count}
\end{align}
This equals the number \eqref{eq:subline-count} of \(\F_{q_0}\)-sublines of \(\PG(1,q)\).  Therefore
\begin{equation}
        \{Z(c):c\in C_s,\ \wt(c)=q-q_0\}
        =\{\vartheta_{\eps}(B):B\text{ an }\F_{q_0}\text{-subline of }\PG(1,q)\}. \label{eq:cyclic-zero-classification}
\end{equation}
Taking complements in \(\U_{q+1}\) yields
\begin{equation}
        \{\supp(c):c\in C_s,\ \wt(c)=q-q_0\}
        =\{\U_{q+1}\setminus\vartheta_{\eps}(B):B\in\mathsf S(\F_q/\F_{q_0})\}. \label{eq:cyclic-support-classification}
\end{equation}
Thus
\begin{equation}
        \B_{\min}(C_s)
        \cong
        \bigl(\PG(1,q),\{\PG(1,q)\setminus B:B\in\mathsf S(\F_q/\F_{q_0})\}\bigr)
        =\overline{S(3,q_0+1,q+1)}.                                  \label{eq:OP18-final}
\end{equation}
This proves Open Problem 18.

\section{Root-Adapted Unit-Circle Equations}
The negacyclic code in Open Problem 27 contains the exponent \(q\) on \(\U_{q^2+1}\), not the exponent \(q^2\).  A fixed Cayley substitution \(Y=\Theta_{\omega}(T)\) generally produces coefficients involving both \(\omega\) and \(\omega^q\), and these coefficients need not lie in \(\F_{q^2}\).  The following root-adapted parametrization is the replacement.

Let
\begin{equation}
        K=\F_{q^2},
        \qquad
        L=\F_{q^4},
        \qquad
        \U=\U_{q^2+1}.                                             \label{eq:KLU}
\end{equation}
For \(\beta\in\U\setminus\{\pm1\}\), define
\begin{equation}
        m_{\beta}:\PG(1,K)\longrightarrow\U,
        \qquad
        m_{\beta}(X)=\frac{\beta X+1}{X+\beta},
        \qquad
        m_{\beta}(\infty)=\beta.                                  \label{eq:mbeta}
\end{equation}
Since
\begin{equation}
        \U\cap K=\{\pm1\},
        \qquad
        \beta\notin K,                                             \label{eq:UcapK}
\end{equation}
the denominator is nonzero for every finite \(X\in K\).  Moreover,
\begin{equation}
        m_{\beta}(X)^{q^2}
        =\frac{\beta^{-1}X+1}{X+\beta^{-1}}
        =m_{\beta}(X)^{-1},                                        \label{eq:mbeta-norm}
\end{equation}
so \(m_{\beta}(X)\in\U\).  Conversely, if \(Y\in\U\setminus\{\beta\}\), then
\begin{equation}
        X=\frac{1-Y\beta}{Y-\beta},
        \qquad
        X^{q^2}=\frac{1-Y^{-1}\beta^{-1}}{Y^{-1}-\beta^{-1}}=X,      \label{eq:mbeta-inverse}
\end{equation}
so \(m_{\beta}\) is bijective.

\begin{lemma}[unit-circle high-root sets]
\label{lem:unit-circle-high-root}
Let
\begin{equation}
        F(Y)=aY+bY^q+cY^{q+1}+d,
        \qquad
        a,b,c,d\in L.                                             \label{eq:unit-F}
\end{equation}
Assume
\begin{equation}
        Z(F)=\{Y\in\U:F(Y)=0\}                                    \label{eq:unit-Z}
\end{equation}
has more than two points but is not all of \(\U\).  Then, for every fixed Cayley map \(\Theta_{\omega}:\PG(1,K)\to\U\),
\begin{equation}
        \Theta_{\omega}^{-1}(Z(F))                                \label{eq:unit-Z-Baer}
\end{equation}
is a Baer subline of \(\PG(1,K)\).
\end{lemma}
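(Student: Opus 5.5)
The plan is to pull the equation back along the root-adapted map $m_\beta$ of \eqref{eq:mbeta} rather than along $\Theta_\omega$. This produces a homogeneous semilinear form \eqref{eq:hom-semi} over $K=\F_{q^2}$ with $\sig:x\mapsto x^q$, so that $K_\sig=\F_q$. We then descend its coefficients from $L$ to $K$, apply Corollary~\ref{cor:large-roots}, and finally transport the conclusion from $m_\beta$-coordinates to $\Theta_\omega$-coordinates.

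\textbf{Step 1 (choice of $\beta$ and the semilinear form).} Since $|Z(F)|>2$, we can fix $\beta\in Z(F)\setminus\{\pm1\}$; any $\beta\in\U\setminus\{\pm1\}$ would actually do. For finite $X\in K$ we have $(\beta X+1)^q=\beta^qX^q+1$ and $(X+\beta)^q=X^q+\beta^q$. Hence
\begin{align*}
(X+\beta)(X^q+\beta^q)\,F(m_\beta(X))
&=a(\beta X+1)(X^q+\beta^q)+b(\beta^qX^q+1)(X+\beta)\\
&\quad+c(\beta X+1)(\beta^qX^q+1)+d(X+\beta)(X^q+\beta^q).
\end{align*}
The prefactor is nonzero, since $X+\beta\ne0$ by \eqref{eq:UcapK}. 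So $m_\beta^{-1}(Z(F))$ is the zero set in $\PG(1,K)$ of a form
\[
\Phi=A X^{q+1}+BX^q+CX+D,\qquad A,B,C,D\in L .
\]
The point $\infty$ is handled by the homogeneous version of the same identity.

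\textbf{Step 2 (descent of coefficients).} Fix a basis $\{1,\eta\}$ of $L/K$ and write each coefficient as $A=A_1+A_2\eta$, and similarly for $B,C,D$, with $A_i,B_i,C_i,D_i\in K$. Because $X^q$ and $X$ lie in $K$ for $X\in K$, we get $\Phi(X)=\Phi_1(X)+\eta\Phi_2(X)$, where $\Phi_1,\Phi_2$ are forms of type \eqref{eq:hom-semi} with coefficients in $K$. Therefore $m_\beta^{-1}(Z(F))=Z(\Phi_1)\cap Z(\Phi_2)$.

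By Corollary~\ref{cor:large-roots}, applied to each nonzero $\Phi_i$, each $Z(\Phi_i)$ is one of three things: all of $\PG(1,K)$, a set of at most two points, or a Baer subline. A zero form counts as the whole line. The intersection has more than two points and is not everything. Hence neither $Z(\Phi_i)$ is small, and at least one of them is a Baer subline $S$. The other is either everything or another Baer subline $S'$. In the second case, Theorem~\ref{thm:three-root} shows that three points determine a subline (its image under the normalizing $L$ is $\PG(1,\F_q)$). So if $S\ne S'$, then $|S\cap S'|\le2$, a contradiction. Thus $m_\beta^{-1}(Z(F))$ is a Baer subline.

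\textbf{Step 3 (change of parametrization).} Both $m_\beta$ and $\Theta_\omega$ are Möbius maps with coefficients in $L$ that biject $\PG(1,K)$ onto $\U$. Hence $\psi=\Theta_\omega^{-1}\circ m_\beta$ is an element of $\PGL(2,L)$ that maps $\PG(1,K)$ onto itself. Choose $M\in\PGL(2,K)$ agreeing with $\psi$ on $\infty,0,1$. Then $M^{-1}\psi\in\PGL(2,L)$ fixes three points, so it is the identity. Thus $\psi\in\PGL(2,K)$, and it maps Baer sublines to Baer sublines. Consequently $\Theta_\omega^{-1}(Z(F))=\psi\bigl(m_\beta^{-1}(Z(F))\bigr)$ is a Baer subline.

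The main obstacle is Step 2. The naive pullback has coefficients in $L$ rather than $K$, which is exactly the difficulty flagged before \eqref{eq:KLU}. Splitting the form into two $K$-forms and intersecting their zero sets is where the argument must be done carefully. In particular, this is where we must handle the degenerate cases $AD-BC=0$ and $\Phi_i\equiv0$ for each component.
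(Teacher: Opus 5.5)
Your proof is correct, but its key step (Step 2) takes a different route from the paper.

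The paper chooses $\beta$ to be a root precisely so that the coefficient of $X^{q+1}$ in the pulled-back equation equals $F(\beta)=0$. What remains is $R_\beta X^q+S_\beta X+T_\beta=0$, whose left side is an $\F_q$-affine map $K\to L$, so coefficients in $L$ cause no trouble. A kernel-dimension count (at least two finite solutions, but not all of $K$) then gives the explicit set $\{\infty\}\cup(X_0+\eta\F_q)$. The change of coordinates is the explicit formula $\Theta_\omega^{-1}(m_\beta(X))=\kappa\frac{X+1}{X-1}$ with $\kappa=\Theta_\omega^{-1}(\beta)\in K$.

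You never use $F(\beta)=0$. Instead you split the $L$-valued form along a $K$-basis $\{1,\eta\}$ of $L$ into two $K$-rational forms and apply Corollary~\ref{cor:large-roots} to each. You then use that two Baer sublines sharing three points coincide. Your appeal to Theorem~\ref{thm:three-root} for this is legitimate: a $Z(\Phi_i)$ with more than two points must come from a nondegenerate form, so it is the root set of some $X^q=M_i(X)$ with $M_i\in\PGL(2,K)$, and one normalizing map sends both sets to $\PG(1,\F_q)$. Your Step 3 replaces the explicit formula by sharp $3$-transitivity: an element of $\PGL(2,L)$ stabilizing $\PG(1,K)$ lies in $\PGL(2,K)$.

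The paper's route is more elementary and self-contained. It needs no appeal to Section 3 and no degenerate-form case analysis, and it describes the root set concretely. Your route is uniform with the semilinear machinery, and it shows that the obstruction flagged before \eqref{eq:KLU} is not really one. Since your descent works for any M\"obius parametrization with coefficients in $L$, you could pull back along $\Theta_\omega$ itself, clearing the nonzero factor $(T-\omega)(T^q-\omega^q)$ for $T\in K$. Then $m_\beta$ and Step 3 would become unnecessary.
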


\begin{proof}
Since \(\U\cap K=\{\pm1\}\) and \(|Z(F)|>2\), choose
\begin{equation}
        \beta\in Z(F)\setminus\{\pm1\}.                            \label{eq:choose-beta}
\end{equation}
Put \(Y=m_{\beta}(X)\).  Multiplication by \((X+\beta)(X^q+\beta^q)\) gives
\begin{align}
0&=(X+\beta)(X^q+\beta^q)F(m_{\beta}(X))                            \notag\\
 &=a(\beta X+1)(X^q+\beta^q)
   +b(\beta^qX^q+1)(X+\beta)                                        \notag\\
&\quad +c(\beta X+1)(\beta^qX^q+1)
   +d(X+\beta)(X^q+\beta^q).                                       \label{eq:unit-expanded-raw}
\end{align}
The coefficient of \(X^{q+1}\) is
\begin{equation}
        a\beta+b\beta^q+c\beta^{q+1}+d=F(\beta)=0.                  \label{eq:unit-leading-zero}
\end{equation}
Hence the finite roots are exactly the solutions in \(K\) of
\begin{equation}
        R_{\beta}X^q+S_{\beta}X+T_{\beta}=0,                         \label{eq:unit-linearized}
\end{equation}
where
\begin{align}
        R_{\beta}&=a+b\beta^{q+1}+c\beta^q+d\beta,                   \label{eq:Rbeta}\\
        S_{\beta}&=a\beta^{q+1}+b+c\beta+d\beta^q,                   \label{eq:Sbeta}\\
        T_{\beta}&=a\beta^q+b\beta+c+d\beta^{q+1}.                   \label{eq:Tbeta}
\end{align}
The linear part
\begin{equation}
        \mathcal L_{\beta}:K\longrightarrow L,
        \qquad
        X\longmapsto R_{\beta}X^q+S_{\beta}X                         \label{eq:Fq-linear-map}
\end{equation}
is \(\F_q\)-linear.  The affine equation \eqref{eq:unit-linearized} has at least two finite solutions because \(\infty\) maps to \(\beta\) and \(|Z(F)|>2\).  Hence
\begin{equation}
        \dim_{\F_q}\ker\mathcal L_{\beta}\ge1.                      \label{eq:kernel-ge-one}
\end{equation}
If \(\dim_{\F_q}\ker\mathcal L_{\beta}=2\), then all finite \(X\in K\) solve \eqref{eq:unit-linearized}; together with \(X=\infty\), this gives \(Z(F)=\U\), contrary to the hypothesis.  Thus
\begin{equation}
        \dim_{\F_q}\ker\mathcal L_{\beta}=1,                         \label{eq:kernel-one}
\end{equation}
and the solution set has the form
\begin{equation}
        X_0+\eta\F_q,
        \qquad
        X_0\in K,
        \qquad
        \eta\in K^*.                                                \label{eq:affine-Fq-line}
\end{equation}
Including \(X=\infty\), we obtain the Baer subline
\begin{equation}
        B_X=\{\infty\}\cup(X_0+\eta\F_q)\subset\PG(1,K),             \label{eq:BX-Baer}
\end{equation}
and
\begin{equation}
        Z(F)=m_{\beta}(B_X).                                       \label{eq:unit-root-mbeta-B}
\end{equation}
It remains to compare \(m_{\beta}\) with the fixed \(\Theta_{\omega}\).  Put
\begin{equation}
        \kappa=\Theta_{\omega}^{-1}(\beta)
        =\omega\frac{\beta+1}{\beta-1}\in K.                       \label{eq:kappa-beta}
\end{equation}
For \(X\ne1\),
\begin{align}
        \Theta_{\omega}^{-1}(m_{\beta}(X))
        &=\omega\frac{m_{\beta}(X)+1}{m_{\beta}(X)-1}               \notag\\
        &=\omega\frac{(\beta+1)(X+1)}{(\beta-1)(X-1)}              \notag\\
        &=\kappa\frac{X+1}{X-1}.                                  \label{eq:compare-mbeta-theta}
\end{align}
The right-hand side is an element of \(\PGL(2,K)\), with projective conventions covering \(X=1\) and \(X=\infty\).  Therefore
\begin{equation}
        \Theta_{\omega}^{-1}(Z(F))
        =\left(\kappa\frac{X+1}{X-1}\right)(B_X),                  \label{eq:theta-root-Baer}
\end{equation}
which is a Baer subline.
\end{proof}

\section{Open Problem 27: Negacyclic Codes and the Elliptic Quadric}
Assume
\begin{equation}
        q\equiv1\pmod4,
        \qquad
        K=\F_{q^2},
        \qquad
        n=q^2+1.                                                   \label{eq:op27-params}
\end{equation}
Let
\begin{equation}
        N_+=C(1,q^2+q+1)                                           \label{eq:Nplus-def}
\end{equation}
be the length-\(n\) negacyclic code over \(K\) with check polynomial \(g_1'(x)g_{q^2+q+1}'(x)\) in \cite{WangTangDing2023}.  Let \(\delta\in\F_{q^4}\) have order \(2n\).  The trace representation in \cite[Lem. 20]{WangTangDing2023} is
\begin{equation}
        c(a,b)=\left(\Tr_{q^4/q^2}\left(a\delta^{-i}+b\delta^{-(q^2+q+1)i}\right)\right)_{i=0}^{n-1},
        \qquad
        a,b\in\F_{q^4}.                                            \label{eq:op27-trace}
\end{equation}
Put
\begin{equation}
        R=\{\delta^{-i}:0\le i\le n-1\}\subset\U_{2n}.              \label{eq:op27-representatives}
\end{equation}
The exponents of elements of \(R\) modulo \(2n\) are
\begin{equation}
        0,\ 2n-1,\ 2n-2,\ldots,\ n+1,                              \label{eq:R-exponents}
\end{equation}
whereas multiplication by \(-1=\delta^n\) adds \(n\) and gives
\begin{equation}
        n,\ n-1,\ n-2,\ldots,\ 1.                                  \label{eq:minus-R-exponents}
\end{equation}
Thus \(R\) contains one element from each pair \(\{x,-x\}\subset\U_{2n}\).  Moreover,
\begin{equation}
        \gcd(2n,q+1)=\gcd(2(q^2+1),q+1)=2.                          \label{eq:op27-gcd}
\end{equation}
Therefore
\begin{equation}
        \pi:\U_{2n}\longrightarrow\U_n,
        \qquad
        \pi(x)=x^{q+1}                                             \label{eq:op27-quotient-map}
\end{equation}
has kernel \(\{\pm1\}\) and image \(\U_n\).  Hence
\begin{equation}
        R\longrightarrow\U_{q^2+1},
        \qquad
        x\longmapsto y=x^{q+1}                                    \label{eq:op27-R-bijection}
\end{equation}
is a bijection.  Zero sets and support sets of the length-\(n\) code may therefore be transported to \(\U_{q^2+1}\) without multiplicity.

For \(x\in R\), put \(y=x^{q+1}\).  Since \(x^{2n}=1\), all exponent reductions below are modulo \(2n\).  The trace zero condition is
\begin{align}
0&=\Tr_{q^4/q^2}\left(ax+b x^{q^2+q+1}\right)                         \notag\\
 &=a x+a^{q^2}x^{q^2}+b x^{q^2+q+1}+b^{q^2}x^{q^2(q^2+q+1)}.          \label{eq:op27-trace-expand}
\end{align}
Multiplying by \(x^q\),
\begin{align}
0
&=a x^{q+1}+a^{q^2}x^{q^2+q}+b x^{(q+1)^2}
  +b^{q^2}x^{q^2(q^2+q+1)+q}                                       \notag\\
&=a y+a^{q^2}y^q+b y^{q+1}+b^{q^2},                                \label{eq:op27-trace-reduction}
\end{align}
because
\begin{equation}
        q^2(q^2+q+1)+q=q(q+1)(q^2+1)=q(q+1)n\equiv0\pmod{2n}.       \label{eq:op27-last-exponent}
\end{equation}
Thus every zero set in quotient coordinates has the form
\begin{equation}
        Z(a,b)=\{y\in\U_{q^2+1}:a y+a^{q^2}y^q+b y^{q+1}+b^{q^2}=0\}. \label{eq:op27-unit-equation}
\end{equation}
The external weight-distribution input from \cite[Th. 22]{WangTangDing2023} is
\begin{equation}
        \wt_{\min}(N_+)=q^2-q,
        \qquad
        A_{q^2-q}=q^5-q.                                          \label{eq:op27-WTD-input}
\end{equation}
Hence every minimum word has
\begin{equation}
        |Z(c)|=(q^2+1)-(q^2-q)=q+1>2.                              \label{eq:op27-min-zero-size}
\end{equation}
By Lemma~\ref{lem:unit-circle-high-root}, for every fixed Cayley identification \(\Theta_{\omega}:\PG(1,q^2)\to\U_{q^2+1}\),
\begin{equation}
        \Theta_{\omega}^{-1}(Z(c))
        \text{ is a Baer subline of }\PG(1,q^2).                    \label{eq:op27-containment}
\end{equation}
Using \eqref{eq:minimum-support-quotient} over the alphabet \(K=\F_{q^2}\),
\begin{align}
        \#\{\text{minimum zero supports}\}
        &=\frac{q^5-q}{q^2-1}                                      \notag\\
        &=\frac{q(q^4-1)}{q^2-1}                                   \notag\\
        &=q(q^2+1).                                                \label{eq:op27-support-count}
\end{align}
This equals \eqref{eq:baer-count}.  Therefore
\begin{equation}
        \{Z(c):c\in N_+,\ \wt(c)=q^2-q\}
        =\{\Theta_{\omega}(B):B\text{ a Baer subline of }\PG(1,q^2)\}. \label{eq:op27-zero-classification}
\end{equation}

It remains to identify the same Baer subline geometry with the elliptic-quadric plane-section geometry.  Choose \(\zeta\in\F_{q^2}\) with
\begin{equation}
        \zeta^q=-\zeta,
        \qquad
        \mu=\zeta^2\in\F_q^*,
        \qquad
        \mu\notin(\F_q^*)^2.                                      \label{eq:zeta-mu}
\end{equation}
Write
\begin{equation}
        t=x+y\zeta,
        \qquad
        x,y\in\F_q.                                                \label{eq:txy}
\end{equation}
Then
\begin{equation}
        t^{q+1}=(x-y\zeta)(x+y\zeta)=x^2-\mu y^2.                 \label{eq:norm-t}
\end{equation}
The map
\begin{equation}
        \iota(t)=\langle(1,x,y,t^{q+1})\rangle,
        \qquad
        \iota(\infty)=\langle(0,0,0,1)\rangle                       \label{eq:iota}
\end{equation}
identifies \(\PG(1,q^2)\) with the quadric
\begin{equation}
        \Q^-:
        \qquad
        X_0X_3=X_1^2-\mu X_2^2                                    \label{eq:elliptic-equation}
\end{equation}
in \(\PG(3,q)\).  Since \(X_1^2-\mu X_2^2\) is anisotropic over \(\F_q\), \(\Q^-\) is elliptic and has
\begin{equation}
        |\Q^-(3,q)|=q^2+1.                                        \label{eq:elliptic-size}
\end{equation}
A plane
\begin{equation}
        \Pi:
        \qquad
        r_0X_0+r_1X_1+r_2X_2+r_3X_3=0,
        \qquad
        r_i\in\F_q,                                                \label{eq:plane}
\end{equation}
pulls back under \(\iota\) to
\begin{equation}
        A t^{q+1}+B t+B^qt^q+D=0,
        \qquad
        A,D\in\F_q,
        \quad
        B\in\F_{q^2},                                             \label{eq:plane-pullback}
\end{equation}
where
\begin{equation}
        A=r_3,
        \qquad
        D=r_0,
        \qquad
        B=\frac{r_1}{2}+\frac{r_2}{2\mu}\zeta.                     \label{eq:plane-B}
\end{equation}
Indeed,
\begin{equation}
        Bt+B^qt^q
        =(B+B^q)x+(B-B^q)y\zeta
        =r_1x+r_2y.                                                \label{eq:Bt-linear}
\end{equation}
The zero sets of \eqref{eq:plane-pullback} with \(q+1\) points are Baer sublines.  If \(A=0\) and \(B\ne0\), then \(\infty\) is on the section and the finite equation is
\begin{equation}
        Bt+B^qt^q+D=0,
        \qquad
        \Tr_{q^2/q}(Bt)+D=0.                                      \label{eq:plane-A0}
\end{equation}
Its finite solutions are an affine \(\F_q\)-line in \(\F_{q^2}\), so the full section is
\begin{equation}
        \{\infty\}\cup(t_0+\eta\F_q),
        \qquad
        \eta\ne0,                                                  \label{eq:plane-A0-Baer}
\end{equation}
a Baer subline.  If \(A\ne0\), put
\begin{equation}
        \alpha=\frac{B^q}{A},
        \qquad
        \Delta=\alpha^{q+1}-\frac{D}{A}\in\F_q.                    \label{eq:alpha-Delta}
\end{equation}
Then \eqref{eq:plane-pullback} is equivalent to
\begin{equation}
        (t+\alpha)^{q+1}=\Delta.                                  \label{eq:norm-circle}
\end{equation}
It has \(q+1\) solutions if and only if \(\Delta\ne0\).  Choosing \(\nu\in\F_{q^2}^*\) with \(\nu^{q+1}=\Delta\), the section is
\begin{equation}
        -\alpha+\nu\U_{q+1}
        =\left\{-\alpha+\nu\frac{x+\eps}{x-\eps}:x\in\PG(1,q)\right\}, \label{eq:circle-Baer}
\end{equation}
again a Baer subline.  Thus every non-tangent plane section of \(\Q^-\) gives a Baer subline of \(\PG(1,q^2)\).

The number of non-tangent planes is
\begin{equation}
        (q^3+q^2+q+1)-(q^2+1)=q^3+q=q(q^2+1),                     \label{eq:non-tangent-plane-count}
\end{equation}
which equals \eqref{eq:baer-count}.  Distinct non-tangent planes have distinct sections because a section contains \(q+1\ge4\) points of an ovoid, no three collinear, and hence spans its plane.  Therefore
\begin{equation}
        \{\iota^{-1}(\Pi\cap\Q^-):\Pi\text{ non-tangent plane}\}
        =\{\text{Baer sublines of }\PG(1,q^2)\}.                   \label{eq:baer-elliptic}
\end{equation}
Combining \eqref{eq:op27-zero-classification} and \eqref{eq:baer-elliptic}, the minimum supports of \(N_+\) are
\begin{equation}
        \{\U_{q^2+1}\setminus\Theta_{\omega}(B):B\text{ Baer}\},    \label{eq:op27-supports}
\end{equation}
and, under \(\iota\circ\Theta_{\omega}^{-1}\), they become
\begin{equation}
        \{\Q^-(3,q)\setminus(\Pi\cap\Q^-(3,q)):
        \Pi\text{ non-tangent plane}\}.                             \label{eq:op27-quadric-complement}
\end{equation}
These are exactly the minimum supports in the elliptic-quadric ovoid code.  This proves Open Problem 27.

\section{Open Problem 28: Square Classes and Constacyclic Ovoid Codes}
Let \(q\) be odd and put
\begin{equation}
        n=q^2+1.                                                    \label{eq:op28-n}
\end{equation}
Throughout this section the constacyclic multiplier satisfies
\begin{equation}
        \lambda\in\F_q^*.                                          \label{eq:lambda-nonzero}
\end{equation}
A length-\(n\) code over \(\F_q\) is \(\lambda\)-constacyclic if it is invariant under
\begin{equation}
        \tau_{\lambda}(c_0,c_1,\ldots,c_{n-1})
        =(c_1,c_2,\ldots,c_{n-1},\lambda c_0).                    \label{eq:consta-shift}
\end{equation}
Let
\begin{equation}
        G=(g_0,g_1,\ldots,g_{n-1})\in\F_q^{4\times n}              \label{eq:op28-G}
\end{equation}
generate a \(\lambda\)-constacyclic ovoid code; the projective columns
\begin{equation}
        \langle g_0\rangle,\langle g_1\rangle,\ldots,\langle g_{n-1}\rangle       \label{eq:op28-projective-columns}
\end{equation}
are the \(q^2+1\) points of an ovoid in \(\PG(3,q)\).  Let
\begin{equation}
        G^{(1)}=(g_1,g_2,\ldots,g_{n-1},\lambda g_0).              \label{eq:G-shifted}
\end{equation}
The code is
\begin{equation}
        \C=\{uG:u\in\F_q^4\},                                    \label{eq:op28-row-code}
\end{equation}
where row vectors are used.  Since \(G\) has rank \(4\), the map
\begin{equation}
        \F_q^4\longrightarrow \C,
        \qquad
        u\longmapsto uG                                             \label{eq:message-injection}
\end{equation}
is injective.  Since \(\C\) is invariant under
\begin{equation}
        \tau_\lambda(c_0,c_1,\ldots,c_{n-1})
        =(c_1,c_2,\ldots,c_{n-1},\lambda c_0),                    \label{eq:tau-repeated}
\end{equation}
for every \(u\in\F_q^4\) there is a unique \(\phi(u)\in\F_q^4\) satisfying
\begin{equation}
        \phi(u)G
        =\tau_\lambda(uG)
        =uG^{(1)}.                                                   \label{eq:phi-def}
\end{equation}
The map \(\phi\) is \(\F_q\)-linear, and it is bijective because \(\tau_\lambda\) is bijective for \(\lambda\in\F_q^*\).  Hence
\begin{equation}
        \phi(u)=uA
        \qquad
        (A\in\GL_4(q)).                                             \label{eq:phi-matrix}
\end{equation}
Equation \eqref{eq:phi-def} gives
\begin{equation}
        A G=G^{(1)},                                                  \label{eq:AG-G1}
\end{equation}
or, columnwise,
\begin{equation}
        Ag_i=g_{i+1}
        \qquad(0\le i<n-1),
        \qquad
        Ag_{n-1}=\lambda g_0.                                      \label{eq:column-recurrence}
\end{equation}
Consequently,
\begin{equation}
        A^n g_i=\lambda g_i
        \qquad(0\le i<n).                                          \label{eq:An-on-columns}
\end{equation}
The columns of an ovoid span \(\F_q^4\), so
\begin{equation}
        A^n=\lambda I_4.                                           \label{eq:An-lambda}
\end{equation}
The projective points \(\langle g_i\rangle\) are all distinct, so the projective transformation \(\bar A\in\PGL_4(q)\) has an orbit of length \(n\).  Since \(\bar A^n=1\),
\begin{equation}
        \ord_{\PGL_4(q)}(\bar A)=n.                                \label{eq:projective-period}
\end{equation}

\begin{lemma}[degree four is forced]
\label{lem:degree-four-forced}
If \(B\in\GL_4(q)\) satisfies
\begin{equation}
        B^n=\mu I_4,
        \qquad
        \ord_{\PGL_4(q)}(\bar B)=n,
        \qquad
        n=q^2+1,                                                   \label{eq:degree-four-assumptions}
\end{equation}
then the minimal polynomial of \(B\) is irreducible of degree \(4\).
\end{lemma}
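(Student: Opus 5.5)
The plan is to pass to eigenvalues over the algebraic closure and show that a reducible characteristic polynomial forces the projective order of \(\bar B\) to divide \(2\). That contradicts \(\ord_{\PGL_4(q)}(\bar B)=n=q^2+1\ge10\).

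\emph{Step 1: \(B\) is diagonalizable, and \(\ord(\bar B)\) is read off from eigenvalue ratios.} Since \(q\) is odd, \(p\nmid n\). Hence \(X^n-\mu\) is separable over \(\F_q\). Because \(B^n=\mu I_4\), the minimal polynomial of \(B\) divides \(X^n-\mu\), so it is separable. Thus \(B\) is diagonalizable over \(\overline{\F_q}\), with eigenvalues \(\theta_1,\dots,\theta_4\).

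For a diagonalizable matrix, \(B^k\) is scalar if and only if \(\theta_1^k=\cdots=\theta_4^k\). So \(\ord(\bar B)\) is the least common multiple of the multiplicative orders of the ratios \(\rho_{i}=\theta_i/\theta_1\). Each ratio satisfies \(\rho_i^n=\mu/\mu=1\), so its order divides \(n\).

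\emph{Step 2: if the characteristic polynomial has no irreducible factor of degree \(4\), all eigenvalues lie in a small field.} The degrees of the irreducible factors of the characteristic polynomial then form one of the partitions \((1,1,1,1)\), \((2,1,1)\), \((2,2)\), \((3,1)\). In the first three cases every \(\theta_i\) lies in \(\F_{q^2}\), so every \(\rho_i\in\F_{q^2}^*\). In the last case every \(\theta_i\) lies in \(\F_{q^3}\), so every \(\rho_i\in\F_{q^3}^*\).

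\emph{Step 3: the gcd computation.} The order of each \(\rho_i\) divides \(\gcd(q^2+1,q^2-1)\) or \(\gcd(q^2+1,q^3-1)\).
\begin{itemize}
\item The first gcd divides \(2\).
\item For the second, write \(q^3-1=q(q^2+1)-(q+1)\). So the gcd divides \(\gcd(q^2+1,q+1)\). Since \(q^2+1=(q+1)(q-1)+2\), this in turn divides \(2\).
\end{itemize}
Hence \(\ord(\bar B)\mid 2<n\), a contradiction.

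\emph{Step 4: conclusion.} The characteristic polynomial therefore has an irreducible factor \(f\) of degree \(4\), so it equals \(f\). The minimal polynomial is a nonconstant divisor of \(f\), so it equals \(f\), which is irreducible of degree \(4\).

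The only delicate point is Step 1: identifying \(\ord(\bar B)\) with the lcm of the orders of the eigenvalue ratios. This requires diagonalizability, and that is exactly where \(p\nmid n\) enters. The rest is the elementary gcd bookkeeping of Step 3.
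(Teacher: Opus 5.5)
Your proof is correct and follows the paper's argument. Both use semisimplicity from \(p\nmid n\), read off the projective order as the lcm of the orders of eigenvalue ratios, and run the gcd computation showing every ratio has order dividing \(2\) unless a degree-four factor occurs. Your partition bookkeeping is slightly tighter than the paper's, since it shows that ratios in \(\F_{q^6}\) never arise in dimension four. One small correction: \(p\nmid n\) holds because \(n=q^2+1\equiv1\pmod p\), not because \(q\) is odd.
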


\begin{proof}
Since \(p\nmid n\), the polynomial \(X^n-\mu\) is square-free, and \(B\) is semisimple.  Let \(\theta_1,\ldots,\theta_s\) be representatives of the distinct eigenvalues of \(B\) in \(\overline{\F}_q\).  The projective order is
\begin{equation}
        \ord(\bar B)=\lcm_{i,j}\ord(\theta_i/\theta_j),             \label{eq:pgl-order-ratios}
\end{equation}
and
\begin{equation}
        (\theta_i/\theta_j)^n=1.                                  \label{eq:ratio-nroot}
\end{equation}
Assume that every irreducible factor of the characteristic polynomial has degree at most \(3\).  Then each \(\theta_i\) lies in some \(\F_{q^{d_i}}\) with \(d_i\in\{1,2,3\}\), and each ratio \(\theta_i/\theta_j\) lies in \(\F_{q^r}\) with
\begin{equation}
        r\in\{1,2,3,6\}.                                           \label{eq:ratio-degrees}
\end{equation}
Therefore \(\ord(\theta_i/\theta_j)\) divides
\begin{equation}
        \gcd(q^2+1,q^r-1).                                        \label{eq:gcd-ratio}
\end{equation}
Using \(q^2\equiv-1\pmod{q^2+1}\),
\begin{align}
\gcd(q^2+1,q-1)&=2,                                                \label{eq:gcd-r1}\\
\gcd(q^2+1,q^2-1)&=2,                                              \label{eq:gcd-r2}\\
\gcd(q^2+1,q^3-1)&=\gcd(q^2+1,q+1)=2,                              \label{eq:gcd-r3}\\
\gcd(q^2+1,q^6-1)&=\gcd(q^2+1,-2)=2.                               \label{eq:gcd-r6}
\end{align}
Thus every ratio \(\theta_i/\theta_j\) has order at most \(2\), and \(\ord(\bar B)\le2\), contradicting \(\ord(\bar B)=q^2+1>2\).  Hence a degree-four irreducible factor occurs.  Since the ambient dimension is \(4\), the minimal polynomial is irreducible of degree \(4\).
\end{proof}

Applying Lemma~\ref{lem:degree-four-forced} to \(A\), identify \(\F_q^4\) with \(\F_{q^4}\) so that \(A\) is multiplication by an element \(\theta\in\F_{q^4}^*\) of degree \(4\).  Then
\begin{equation}
        \theta^n=\lambda,
        \qquad
        \ord_{\F_{q^4}^*/\F_q^*}(\theta\F_q^*)=n.                  \label{eq:theta-conditions}
\end{equation}
Let \(\Omega\) generate \(\F_{q^4}^*\), and put
\begin{equation}
        \xi=\Omega^{(q^4-1)/(q-1)}
        =\Omega^{(q+1)n}.                                          \label{eq:xi-generator}
\end{equation}
Then \(\xi\) generates \(\F_q^*\).  Write
\begin{equation}
        \theta=\Omega^a,
        \qquad
        \lambda=\theta^n=\xi^b,                                    \label{eq:theta-lambda-b}
\end{equation}
where \(b\) is defined modulo \(q-1\).  Since
\begin{equation}
        \Omega^{an}=\Omega^{b(q+1)n},                              \label{eq:Omega-equality}
\end{equation}
we have
\begin{equation}
        an\equiv b(q+1)n\pmod{(q-1)(q+1)n}.                         \label{eq:normal-congruence}
\end{equation}
Cancelling \(n\) gives
\begin{equation}
        a\equiv(q+1)b\pmod{q^2-1}.                                 \label{eq:a-congruence-fixed}
\end{equation}
Equivalently, for some integer \(t\),
\begin{equation}
        a=(q+1)b+(q^2-1)t=(q+1)c,
        \qquad
        c=b+(q-1)t.                                                 \label{eq:c-defined}
\end{equation}
Therefore
\begin{equation}
        c\equiv b\pmod{q-1},
        \qquad
        \theta\F_q^*=\Omega^{(q+1)c}\F_q^*,
        \qquad
        \theta^n=\xi^c=\xi^b=\lambda.                              \label{eq:theta-normal-correct}
\end{equation}
The quotient \(\F_{q^4}^*/\F_q^*\) is cyclic of order
\begin{equation}
        \frac{q^4-1}{q-1}=(q+1)n,                                  \label{eq:quotient-order}
\end{equation}
with generator \(\Omega\F_q^*\).  Hence
\begin{align}
        \ord(\theta\F_q^*)
        &=\ord(\Omega^{(q+1)c}\F_q^*)                              \notag\\
        &=\frac{(q+1)n}{\gcd((q+1)n,(q+1)c)}                        \notag\\
        &=\frac{n}{\gcd(n,c)}.                                    \label{eq:projective-order-c}
\end{align}
By \eqref{eq:theta-conditions},
\begin{equation}
        \gcd(n,c)=1.                                               \label{eq:gcd-n-c}
\end{equation}
Since \(n=q^2+1\) is even, \(c\) is odd.  Since \(q-1\) is even and \(c\equiv b\pmod{q-1}\), the exponent \(b\) is odd.  In the cyclic group \(\F_q^*=\langle\xi\rangle\),
\begin{equation}
        (\F_q^*)^2=\{\xi^{2j}:0\le j<\tfrac{q-1}{2}\}.             \label{eq:squares-even}
\end{equation}
Thus \(\lambda=\xi^b\) is a nonsquare.  We have proved
\begin{equation}
        \lambda\text{-constacyclic ovoid code exists}
        \Longrightarrow
        \lambda\notin(\F_q^*)^2.                                  \label{eq:necessity-square}
\end{equation}

Conversely, suppose
\begin{equation}
        \lambda\notin(\F_q^*)^2.                                  \label{eq:lambda-nonsquare}
\end{equation}
Write
\begin{equation}
        \lambda=\xi^b,
        \qquad
        b\equiv1\pmod2.                                           \label{eq:lambda-b-odd}
\end{equation}
We choose an odd representative \(c\equiv b\pmod{q-1}\) with
\begin{equation}
        \gcd(c,n)=1.                                               \label{eq:choose-c-coprime}
\end{equation}
Indeed, all integers
\begin{equation}
        c=b+(q-1)t                                                  \label{eq:c-progression}
\end{equation}
are odd.  If \(\ell\) is an odd prime divisor of \(n\), then
\begin{equation}
        \gcd(\ell,q-1)=1,                                         \label{eq:ell-coprime}
\end{equation}
because \(\gcd(q^2+1,q-1)=2\).  Hence exactly one residue class of \(t\pmod\ell\) makes \(\ell\mid b+(q-1)t\).  Avoiding these finitely many classes by the Chinese remainder theorem gives \eqref{eq:choose-c-coprime}.

Set
\begin{equation}
        \theta=\Omega^{(q+1)c}.                                    \label{eq:theta-suff}
\end{equation}
Then
\begin{align}
        \theta^n
        &=\Omega^{(q+1)cn}                                         \notag\\
        &=\xi^c                                                     \notag\\
        &=\xi^b                                                     \notag\\
        &=\lambda,                                                 \label{eq:theta-suff-norm}
\end{align}
and
\begin{equation}
        \ord(\theta\F_q^*)=\frac{n}{\gcd(n,c)}=n.                  \label{eq:theta-suff-properties}
\end{equation}
Moreover \(\theta\notin\F_{q^2}\), since otherwise \(\theta\F_q^*\) would have order dividing
\begin{equation}
        |\F_{q^2}^*/\F_q^*|=q+1<q^2+1.                              \label{eq:theta-not-Fq2}
\end{equation}
Thus \(\theta\) has degree \(4\) over \(\F_q\).  On the \(\F_q\)-space \(\F_{q^4}\), define
\begin{equation}
        Q(z)=\Tr_{q^2/q}\bigl(z^{q^2+1}\bigr).                    \label{eq:Q-form}
\end{equation}
Its polar form is
\begin{align}
        B_Q(z,w)
        &=Q(z+w)-Q(z)-Q(w)                                         \notag\\
        &=\Tr_{q^2/q}\bigl(z^{q^2}w+zw^{q^2}\bigr)                 \notag\\
        &=\Tr_{q^4/q}\bigl(z^{q^2}w\bigr).                         \label{eq:Q-polar}
\end{align}
The trace pairing \((z,w)\mapsto\Tr_{q^4/q}(zw)\) is nondegenerate, and \(z\mapsto z^{q^2}\) is an automorphism; hence \(B_Q\) is nondegenerate.  The norm map
\begin{equation}
        \Norm_{q^4/q^2}(z)=z^{q^2+1}                              \label{eq:norm-q4-q2}
\end{equation}
is onto \(\F_{q^2}^*\) with kernel size \(q^2+1\), and
\begin{equation}
        \#\{y\in\F_{q^2}^*:\Tr_{q^2/q}(y)=0\}=q-1.                \label{eq:tracezero-count}
\end{equation}
Therefore
\begin{equation}
        \#\{z\in\F_{q^4}^*:Q(z)=0\}=(q-1)(q^2+1),                \label{eq:Q-zero-vectors}
\end{equation}
and the projective zero set
\begin{equation}
        \Q=\{\langle z\rangle\in\PG(3,q):Q(z)=0\}                 \label{eq:elliptic-Q}
\end{equation}
has
\begin{equation}
        |\Q|=\frac{(q-1)(q^2+1)}{q-1}=q^2+1.                      \label{eq:Q-point-count}
\end{equation}
Since \(Q\) is nondegenerate and has \(q^2+1\) projective zeros in \(\PG(3,q)\), \(\Q\) is an elliptic quadric, hence an ovoid.

Furthermore,
\begin{align}
        Q(\theta z)
        &=\Tr_{q^2/q}\bigl(\theta^{q^2+1}z^{q^2+1}\bigr)           \notag\\
        &=\Tr_{q^2/q}\bigl(\lambda z^{q^2+1}\bigr)                 \notag\\
        &=\lambda Q(z),                                           \label{eq:Q-similitude}
\end{align}
because
\begin{equation}
        \theta^{q^2+1}=\theta^n=\lambda\in\F_q.                    \label{eq:theta-q2plus1}
\end{equation}
Choose any nonzero \(v\in\F_{q^4}\) with \(Q(v)=0\).  Multiplication by \(\theta\) preserves \(\Q\), and
\begin{equation}
        \{\langle v\rangle,\langle\theta v\rangle,\ldots,
        \langle\theta^{n-1}v\rangle\}                              \label{eq:theta-orbit-Q-left}
\end{equation}
has length \(n\) by \eqref{eq:theta-suff-properties}.  Since \(|\Q|=n\),
\begin{equation}
        \{\langle v\rangle,\langle\theta v\rangle,\ldots,
        \langle\theta^{n-1}v\rangle\}=\Q.                          \label{eq:theta-orbit-Q}
\end{equation}
Let
\begin{equation}
        G_{\theta,v}=\bigl(v,\theta v,\theta^2v,\ldots,\theta^{n-1}v\bigr), \label{eq:theta-columns}
\end{equation}
written in any \(\F_q\)-basis of \(\F_{q^4}\).  The columns represent the ovoid \(\Q\).  Since
\begin{equation}
        \theta(\theta^i v)=\theta^{i+1}v,
        \qquad
        \theta(\theta^{n-1}v)=\theta^n v=\lambda v,                 \label{eq:last-column-lambda}
\end{equation}
the row code generated by \(G_{\theta,v}\) is \(\lambda\)-constacyclic.  This proves
\begin{equation}
        \lambda\in\F_q^*,
        \qquad
        \exists\text{ length-}q^2+1\;\lambda\text{-constacyclic ovoid code over }\F_q
        \Longleftrightarrow
        \lambda\notin(\F_q^*)^2.                                  \label{eq:sharp-classification}
\end{equation}
Finally,
\begin{equation}
        -1\notin(\F_q^*)^2
        \Longleftrightarrow
        q\equiv3\pmod4,                                           \label{eq:minus-one}
\end{equation}
so length-\(q^2+1\) negacyclic ovoid codes over \(\F_q\) exist exactly for \(q\equiv3\pmod4\).  This proves Open Problem 28.

\section{Open Problems 37 and 38: Lifted Ovoid Codes}
Let \(\Oset\subset\PG(3,q)\) be any ovoid, and let
\begin{equation}
        G_{\Oset}=(g_P)_{P\in\Oset}\in\F_q^{4\times(q^2+1)}          \label{eq:ovoid-generator}
\end{equation}
be a generator matrix whose projective columns are the points of \(\Oset\).  For \(K_e=\F_{q^e}\), define
\begin{equation}
        \C_e(\Oset)=\{aG_{\Oset}:a\in K_e^4\}\subseteq K_e^{q^2+1}. \label{eq:lift-def}
\end{equation}
Choose an \(\F_q\)-basis \(\epsilon_1,\ldots,\epsilon_e\) of \(K_e\).  Every row vector \(a\in K_e^4\) has a unique expansion
\begin{equation}
        a=\sum_{i=1}^e\epsilon_i u_i,
        \qquad
        u_i\in(\F_q^4)^*.                                          \label{eq:a-expansion}
\end{equation}
Put
\begin{equation}
        R(a)=\Span_{\F_q}\{u_1,\ldots,u_e\},
        \qquad
        \rho(a)=\dim_{\F_q}R(a),                                  \label{eq:R-rho}
\end{equation}
and
\begin{equation}
        \Lambda(a)=\PG(R(a)^\perp)\subseteq\PG(3,q),               \label{eq:Lambda}
\end{equation}
with the convention that \(\PG(0)=\emptyset\) when \(R(a)=(\F_q^4)^*\).  Because \(\epsilon_1,\ldots,\epsilon_e\) are linearly independent over \(\F_q\), for every \(P\in\Oset\),
\begin{align}
        a(g_P)=0
        &\Longleftrightarrow
        \sum_{i=1}^e\epsilon_i u_i(g_P)=0                         \notag\\
        &\Longleftrightarrow
        u_i(g_P)=0\quad(1\le i\le e)                              \notag\\
        &\Longleftrightarrow
        P\in\Lambda(a).                                           \label{eq:zero-rank-kernel}
\end{align}
Therefore
\begin{equation}
        \wt(aG_{\Oset})=q^2+1-|\Oset\cap\Lambda(a)|.              \label{eq:weight-section}
\end{equation}
If \(\rho(a)=r\), then
\begin{equation}
        \begin{array}{c|cccc}
        r&1&2&3&4\\ \hline
        \Lambda(a)&\text{plane}&\text{line}&\text{point}&\emptyset.
        \end{array}                                                \label{eq:flat-table}
\end{equation}
For a fixed \(r\)-subspace \(S\le(\F_q^4)^*\), the number of \(a\)'s with \(R(a)=S\) equals the number of ordered \(e\)-tuples in \(S\) spanning \(S\), namely
\begin{equation}
        M_r(e)=\prod_{j=0}^{r-1}(q^e-q^j),
        \qquad
        M_r(e)=0\quad(r>e).                                      \label{eq:Mr}
\end{equation}
Equivalently, \(M_r(e)\) is the number of full-column-rank \(e\times r\) matrices over \(\F_q\).

The incidence inventory of an ovoid in \(\PG(3,q)\) is
\begin{align}
&\#\{\text{secant planes}\}=q^3+q,
&&|\Pi\cap\Oset|=q+1,                                                \label{eq:secant-planes-lift}\\
&\#\{\text{tangent planes}\}=q^2+1,
&&|\Pi\cap\Oset|=1,                                                  \label{eq:tangent-planes-lift}\\
&\#\{\text{secant lines}\}=\binom{q^2+1}{2}=\frac{q^2(q^2+1)}2,
&&|\ell\cap\Oset|=2,                                                 \label{eq:secant-lines-lift}\\
&\#\{\text{tangent lines}\}=(q^2+1)(q+1),
&&|\ell\cap\Oset|=1,                                                 \label{eq:tangent-lines-lift}\\
&\#\{\text{external lines}\}
=(q^2+1)(q^2+q+1)-\frac{q^2(q^2+1)}2-(q^2+1)(q+1)
=\frac{q^2(q^2+1)}2,
&&|\ell\cap\Oset|=0,                                                 \label{eq:external-lines-lift}\\
&\#\{\text{points on }\Oset\}=q^2+1,
&&|P\cap\Oset|=1,                                                    \label{eq:points-on-lift}\\
&\#\{\text{points off }\Oset\}
=(q^3+q^2+q+1)-(q^2+1)=q^3+q=q(q^2+1),
&&|P\cap\Oset|=0.                                                    \label{eq:points-off-lift}
\end{align}
Combining \eqref{eq:weight-section}--\eqref{eq:points-off-lift}, the coefficient of each weight is obtained by multiplying the number of flats of the corresponding type by \(M_r(e)\).  Thus
\begin{align}
W_{\C_e(\Oset)}(z)=1
&+(q^3+q)M_1(e)z^{q^2-q}                                             \notag\\
&+\frac{q^2(q^2+1)}2M_2(e)z^{q^2-1}                                 \notag\\
&+\bigl((q^2+1)M_1(e)+(q^2+1)(q+1)M_2(e)+(q^2+1)M_3(e)\bigr)z^{q^2} \notag\\
&+\left(\frac{q^2(q^2+1)}2M_2(e)+q(q^2+1)M_3(e)+M_4(e)\right)z^{q^2+1}. \label{eq:lift-general-proof}
\end{align}
At \(z=1\),
\begin{equation}
        W_{\C_e(\Oset)}(1)
        =1+\sum_{r=1}^{4}\genfrac{[}{]}{0pt}{}{4}{r}_q M_r(e)
        =q^{4e},                                                     \label{eq:lift-sanity}
\end{equation}
where \(\genfrac{[}{]}{0pt}{}{4}{r}_q\) is the Gaussian binomial coefficient.  This checks the ambient \(K_e\)-dimension four.

For \(e=2\),
\begin{equation}
        M_1(2)=q^2-1,
        \qquad
        M_2(2)=(q^2-1)(q^2-q),
        \qquad
        M_3(2)=M_4(2)=0.                                           \label{eq:M-e2}
\end{equation}
Consequently,
\begin{align}
A_{q^2-q}^{(2)}
&=(q^3+q)(q^2-1)=q^5-q,                                             \label{eq:e2-min}\\
A_{q^2-1}^{(2)}
&=\frac{q^2(q^2+1)}2(q^2-1)(q^2-q)                                  \notag\\
&=\frac{q^3(q-1)(q^4-1)}2,                                         \label{eq:e2-minus}\\
A_{q^2+1}^{(2)}
&=\frac{q^2(q^2+1)}2(q^2-1)(q^2-q)                                  \notag\\
&=\frac{q^3(q-1)(q^4-1)}2,                                         \label{eq:e2-plus}\\
A_{q^2}^{(2)}
&=(q^2+1)(q^2-1)+(q^2+1)(q+1)(q^2-1)(q^2-q)                         \notag\\
&=q^4-1+q(q^2-1)(q^4-1)                                            \notag\\
&=q^7-q^5+q^4-q^3+q-1.                                             \label{eq:e2-middle}
\end{align}
Hence
\begin{align}
W_{\C_2(\Oset)}(z)=1
&+(q^5-q)z^{q^2-q}                                                    \notag\\
&+\frac{q^3(q-1)(q^4-1)}2\bigl(z^{q^2-1}+z^{q^2+1}\bigr)             \notag\\
&+(q^7-q^5+q^4-q^3+q-1)z^{q^2}.                                    \label{eq:e2-final}
\end{align}
The coefficient sum is
\begin{align}
&1+(q^5-q)+q^3(q-1)(q^4-1)+(q^7-q^5+q^4-q^3+q-1)                    \notag\\
&\qquad=q^8,                                                       \label{eq:e2-sum-check}
\end{align}
as required for a \(4\)-dimensional code over \(\F_{q^2}\).  Formula \eqref{eq:e2-final} depends neither on \(\Oset\) nor on \(q\pmod4\).  It is exactly the four-weight enumerator displayed for the two negacyclic ovoid-code families in \cite[Ths. 22,32]{WangTangDing2023}.  This resolves Open Problems 37 and 38.

\section{Open Problems 40 and 41: Proper Negacyclic MDS Codes}
We first record the exact support saturation needed for the construction.  Let \(\C\) be an \([n,k,d]_q\) MDS code with coordinate set
\begin{equation}
        P=\{0,1,\ldots,n-1\}.                                           \label{eq:P-set}
\end{equation}
For \(S\subseteq P\), set
\begin{equation}
        \C(S)=\{c\in\C:\supp(c)\subseteq S\}.                          \label{eq:C-S}
\end{equation}
If \(|S|=w\ge d\), then
\begin{align}
        \dim\C(S)
        &=k-\rank(G_{P\setminus S})                                      \notag\\
        &=k-(n-w)                                                        \notag\\
        &=w-d+1,                                                         \label{eq:shorten-dim-proof}
\end{align}
where \(G\) is a generator matrix and the MDS property gives independence of the \(n-w\le k-1\) columns outside \(S\).  For a fixed \(w\)-subset \(S\), inclusion-exclusion gives the number of codewords with exact support \(S\):
\begin{align}
        E_{q,d}(w)
        &=\sum_{i=0}^{w}(-1)^i\binom{w}{i}|\C(S\setminus I_i)|            \notag\\
        &=\sum_{i=0}^{w-d}(-1)^i\binom{w}{i}\left(q^{w-d+1-i}-1\right),  \label{eq:E-MDS}
\end{align}
where \(I_i\) denotes any \(i\)-subset and the second line uses \(\sum_{i=0}^{w}(-1)^i\binom wi=0\).

If \(w\le q\), then \(E_{q,d}(w)>0\).  Indeed, puncture-shortening to \(S\) gives an \([w,r,d]_q\) MDS code with
\begin{equation}
        r=w-d+1.                                                         \label{eq:r-MDS}
\end{equation}
For \(r=1\), every nonzero word has full support.  For \(r\ge2\), the coordinate-zero conditions are \(w\) distinct hyperplanes in \(\F_q^r\).  A vector space of dimension \(r\ge2\) over \(\F_q\) cannot be covered by at most \(q\) proper hyperplanes.  For completeness, here is the proof.  In dimension \(2\), there are \(q+1\) one-dimensional subspaces through the origin, so at most \(q\) of them do not cover \(\F_q^2\).  For \(r>2\), take one hyperplane \(H_1\) and a vector \(v\notin H_1\).  By induction choose
\begin{equation}
        u\in H_1\setminus\bigcup_{i=2}^{w}(H_1\cap H_i).                \label{eq:hyperplane-induction-u}
\end{equation}
Then the affine line
\begin{equation}
        v+\F_q u                                                        \label{eq:affine-line-avoid}
\end{equation}
is disjoint from \(H_1\), and it meets each \(H_i\), \(i\ge2\), in at most one point.  Since \(w-1\le q-1\), at least one point of \eqref{eq:affine-line-avoid} avoids all hyperplanes.  Thus
\begin{equation}
        E_{q,d}(w)>0
        \qquad
        (d\le w\le\min(n,q)).                                           \label{eq:E-positive}
\end{equation}
Consequently,
\begin{equation}
        \B_w(\C)=\binom{P}{w}
        \qquad
        (d\le w\le\min(n,q)),                                           \label{eq:MDS-complete-support}
\end{equation}
and the support design is the complete simple design
\begin{equation}
        t-\left(n,w,\binom{n-t}{w-t}\right),
        \qquad
        t\le w.                                                         \label{eq:complete-design}
\end{equation}

Now let \(q\) be odd and put
\begin{equation}
        n_q=\frac{q-1}{2},
        \qquad
        \rho\in\F_q^*,
        \qquad
        \ord(\rho)=q-1,
        \qquad
        \beta=\rho^2.                                                    \label{eq:negacyclic-MDS-params}
\end{equation}
Then
\begin{equation}
        \ord(\beta)=n_q,
        \qquad
        \rho^{n_q}=-1,
        \qquad
        X^{n_q}+1=\prod_{j=0}^{n_q-1}(X-\rho\beta^j).                   \label{eq:negacyclic-roots}
\end{equation}
For \(1\le k\le n_q\), define
\begin{equation}
        g_{q,k}(X)=\prod_{j=0}^{n_q-k-1}(X-\rho\beta^j)
        =\prod_{j=0}^{n_q-k-1}(X-\rho^{2j+1}),                           \label{eq:gqk}
\end{equation}
and
\begin{equation}
        \N_{q,k}=\langle g_{q,k}(X)\rangle\subseteq\F_q[X]/(X^{n_q}+1).  \label{eq:Nqk}
\end{equation}
If a nonzero word \(c(X)=\sum c_iX^i\in\N_{q,k}\) had weight \(s\le n_q-k\) with support \(i_1,\ldots,i_s\), then the first \(s\) root equations would give
\begin{equation}
        \sum_{\ell=1}^s c_{i_\ell}(\rho\beta^j)^{i_\ell}=0,
        \qquad
        0\le j\le s-1.                                                   \label{eq:Vandermonde-check}
\end{equation}
The determinant of the coefficient matrix is
\begin{align}
        \det\bigl((\rho\beta^j)^{i_\ell}\bigr)_{0\le j\le s-1,1\le\ell\le s}
        &=\left(\prod_{\ell=1}^s\rho^{i_\ell}\right)
          \det\bigl((\beta^{i_\ell})^j\bigr)                             \notag\\
        &=\left(\prod_{\ell=1}^s\rho^{i_\ell}\right)
          \prod_{1\le a<b\le s}(\beta^{i_b}-\beta^{i_a})                 \notag\\
        &\ne0,                                                           \label{eq:Vandermonde-det}
\end{align}
because \(0\le i_a<i_b\le n_q-1\) and \(\beta\) has order \(n_q\).  Hence
\begin{equation}
        d(\N_{q,k})\ge n_q-k+1.                                          \label{eq:BCH-bound}
\end{equation}
Since
\begin{equation}
        \deg g_{q,k}=n_q-k,
        \qquad
        \dim\N_{q,k}=k,                                                   \label{eq:Nqk-dim}
\end{equation}
the Singleton bound gives
\begin{equation}
        d(\N_{q,k})\le n_q-k+1.                                          \label{eq:Singleton}
\end{equation}
Thus
\begin{equation}
        \N_{q,k}\text{ is a proper negacyclic }[n_q,k,n_q-k+1]_q\text{ MDS code}. \label{eq:Nqk-MDS}
\end{equation}
It is proper because the constacyclic multiplier is \(-1\ne1\).

For Open Problem 40, take
\begin{equation}
        q=23,
        \qquad
        n_q=11,
        \qquad
        k=5,
        \qquad
        d=7.                                                            \label{eq:OP40-params}
\end{equation}
Then \(\N_{23,5}\) is a proper negacyclic \([11,5,7]_{23}\) MDS code.  Since
\begin{equation}
        7\le11<23,
\end{equation}
\eqref{eq:MDS-complete-support} gives
\begin{equation}
        \B_7(\N_{23,5})=\binom{[11]}{7}.                                \label{eq:OP40-blocks}
\end{equation}
Therefore the minimum supports form the complete simple
\begin{equation}
        5-\left(11,7,\binom{11-5}{7-5}\right)=5-(11,7,15)               \label{eq:OP40-design}
\end{equation}
design.

For Open Problem 41, let \(q\) run through odd prime powers \(q\ge23\), and set
\begin{equation}
        n_q=\frac{q-1}{2},
        \qquad
        k_q=\left\lfloor\frac{n_q}{2}\right\rfloor,
        \qquad
        d_q=n_q-k_q+1.                                                   \label{eq:OP41-params}
\end{equation}
Then
\begin{equation}
        n_q<q,
        \qquad
        d_q\ge7>5,                                                       \label{eq:OP41-ineq}
\end{equation}
and
\begin{equation}
        \B_{d_q}(\N_{q,k_q})=\binom{[n_q]}{d_q}.                         \label{eq:OP41-blocks}
\end{equation}
Hence the minimum supports form the complete simple
\begin{equation}
        5-\left(n_q,d_q,\binom{n_q-5}{d_q-5}\right)                     \label{eq:OP41-design}
\end{equation}
design.  This gives infinitely many proper constacyclic codes supporting simple \(5\)-designs.

\begin{remark}
The solution of Open Problems 40 and 41 is literal: the block sets are simple because they are complete sets of subsets.  If one imposes an additional non-completeness condition, that is a stronger problem than the one stated in \cite{WangTangDing2023} and is not claimed here.
\end{remark}

\section{Consistency Checks and Conclusion}
The seven solutions use five algebraic mechanisms:
\begin{align}
&\text{semilinear subline rigidity}
&&\Longrightarrow
&&\mathsf{OP}_{18},                                                   \label{eq:conclusion1}\\
&\text{root-adapted unit-circle rigidity}
&&\Longrightarrow
&&\mathsf{OP}_{27},                                                   \label{eq:conclusion2}\\
&\text{projective period plus corrected square-class parity}
&&\Longrightarrow
&&\mathsf{OP}_{28},                                                   \label{eq:conclusion3}\\
&\text{extension-field rank-kernel enumeration}
&&\Longrightarrow
&&\mathsf{OP}_{37},\ \mathsf{OP}_{38},                               \label{eq:conclusion4}\\
&\text{proper negacyclic MDS support saturation}
&&\Longrightarrow
&&\mathsf{OP}_{40},\ \mathsf{OP}_{41}.                               \label{eq:conclusion5}
\end{align}
For Open Problem 18,
\begin{equation}
        \B_{\min}\left(C\left(\frac{p^s-1}{2},\frac{p^s+1}{2}\right)\right)
        \cong \overline{S(3,q_0+1,q+1)}.                            \label{eq:final-op18-full}
\end{equation}
For Open Problem 27,
\begin{equation}
        \B_{\min}\bigl(C(1,q^2+q+1)\bigr)
        \cong
        \{ \Q^-(3,q)\setminus(\Pi\cap\Q^-(3,q)):\Pi\text{ non-tangent}\}.  \label{eq:final-op27-full}
\end{equation}
For Open Problem 28,
\begin{equation}
        \lambda\in\F_q^*,
        \qquad
        \exists\lambda\text{-constacyclic ovoid code}
        \Longleftrightarrow
        \lambda\notin(\F_q^*)^2.                                  \label{eq:final-op28-full}
\end{equation}
For Open Problems 37 and 38,
\begin{equation}
        W_{\C_2(\Oset)}(z)=1+(q^5-q)z^{q^2-q}
        +\frac{q^3(q-1)(q^4-1)}2\bigl(z^{q^2-1}+z^{q^2+1}\bigr)
        +(q^7-q^5+q^4-q^3+q-1)z^{q^2}.                             \label{eq:final-op3738-full}
\end{equation}
For Open Problem 40,
\begin{equation}
        \N_{23,5}\text{ is a proper negacyclic }[11,5,7]_{23}\text{ MDS code},
        \qquad
        \B_7(\N_{23,5})=\binom{[11]}7,                              \label{eq:final-op40-full}
\end{equation}
so the minimum supports form
\begin{equation}
        5-(11,7,15).                                                  \label{eq:final-op40-design-full}
\end{equation}
For Open Problem 41,
\begin{equation}
        q\ge23,
        \qquad
        n_q=\frac{q-1}{2},
        \qquad
        k_q=\left\lfloor\frac{n_q}{2}\right\rfloor,
        \qquad
        d_q=n_q-k_q+1,                                                \label{eq:final-op41-params-full}
\end{equation}
gives
\begin{equation}
        \B_{d_q}(\N_{q,k_q})=\binom{[n_q]}{d_q},
        \qquad
        5-\left(n_q,d_q,\binom{n_q-5}{d_q-5}\right).                 \label{eq:final-op41-design-full}
\end{equation}
The two delicate technical points are explicit.  First, in the negacyclic isomorphism problem,
\begin{equation}
        R\xrightarrow{\;x\mapsto x\{\pm1\}\;} \U_{2(q^2+1)}/\{\pm1\}
        \xrightarrow{\;x\{\pm1\}\mapsto x^{q+1}\;} \U_{q^2+1},      \label{eq:final-coordinate-transport-full}
\end{equation}
so no support multiplicity is lost.  Second, in the square-class proof,
\begin{equation}
        a\equiv(q+1)b\pmod{q^2-1}                                  \label{eq:final-wrong-shortcut-full}
\end{equation}
does not by itself identify \(\theta\F_q^*\) with \(\Omega^{(q+1)b}\F_q^*\).  The correct statement is
\begin{equation}
        a=(q+1)c,
        \qquad
        c\equiv b\pmod{q-1},
        \qquad
        \ord(\theta\F_q^*)=\frac{q^2+1}{\gcd(q^2+1,c)},             \label{eq:final-correction-full}
\end{equation}
which forces the nonsquare obstruction and gives \(q\equiv3\pmod4\) for negacyclic ovoid codes.
\section*{Declaration of Generative AI and AI-Assisted Technologies in the Writing Process}
During the preparation of this work, the authors used DeepSeek to build a specialized agent for solving mathematical problems, which was employed to generate an initial proof of the main theorem. After using this tool, the authors reviewed and edited the content as needed and take full responsibility for the content of the published article.


\begin{thebibliography}{99}

\bibitem{WangTangDing2023}
X. Wang, C. Tang, and C. Ding, ``Infinite families of cyclic and negacyclic codes supporting \(3\)-designs,'' \emph{IEEE Transactions on Information Theory}, vol. 69, no. 4, pp. 2341--2354, Apr. 2023.

\bibitem{AssmusMattson1969}
E. F. Assmus, Jr. and H. F. Mattson, Jr., ``New 5-designs,'' \emph{Journal of Combinatorial Theory}, vol. 6, no. 2, pp. 122--151, Mar. 1969.

\bibitem{DingTangBook}
C. Ding and C. Tang, \emph{Designs From Linear Codes}, 2nd ed. Singapore: World Scientific, 2022.

\bibitem{HuffmanPless}
W. C. Huffman and V. Pless, \emph{Fundamentals of Error-Correcting Codes}. Cambridge, U.K.: Cambridge University Press, 2003.

\bibitem{MacWilliamsSloane}
F. J. MacWilliams and N. J. A. Sloane, \emph{The Theory of Error-Correcting Codes}. Amsterdam, The Netherlands: North-Holland, 1977.

\bibitem{LidlNiederreiter}
R. Lidl and H. Niederreiter, \emph{Finite Fields}, 2nd ed. Cambridge, U.K.: Cambridge University Press, 1997.

\bibitem{Hirschfeld}
J. W. P. Hirschfeld, \emph{Projective Geometries over Finite Fields}, 2nd ed. Oxford, U.K.: Oxford University Press, 1998.

\bibitem{HirschfeldThas}
J. W. P. Hirschfeld and J. A. Thas, \emph{General Galois Geometries}. Oxford, U.K.: Clarendon Press, 1991.

\bibitem{Dembowski}
P. Dembowski, \emph{Finite Geometries}. Berlin, Germany: Springer-Verlag, 1968.

\bibitem{BethJungnickelLenz}
T. Beth, D. Jungnickel, and H. Lenz, \emph{Design Theory}, 2nd ed., 2 vols. Cambridge, U.K.: Cambridge University Press, 1999.

\bibitem{VanLint}
J. H. van Lint, \emph{Introduction to Coding Theory}, 3rd ed. Berlin, Germany: Springer, 1999.

\bibitem{Roth}
R. M. Roth, \emph{Introduction to Coding Theory}. Cambridge, U.K.: Cambridge University Press, 2006.

\end{thebibliography}
\end{document}